\newtheorem{thm}{Theorem}[section]
\newtheorem{lem}[thm]{Lemma}
\newtheorem{rem}[thm]{Remark}
\newtheorem{prop}[thm]{Proposition}
\numberwithin{equation}{section}
\newcommand{\cR}{\mathcal{R}}
\begin{document}
\title{Risk Minimization for Game Options in Markets Imposing Minimal Transaction Costs}

 \author{Yan Dolinsky and Yuri Kifer \\
 Hebrew University of Jerusalem
}%

\address{
 Department of Statistics, Hebrew University of Jerusalem, Israel\\
 {e.mail: yan.yolinsky@mail.huji.ac.il}}
 \address{
 Department of Mathematics, Hebrew University of Jerusalem, Israel\\
 {e.mail: kifer@math.huji.ac.il}}

\date{\today}

\begin{abstract}
We study partial hedging for game options in markets with transaction costs
bounded from below. More precisely, we assume that the investor's transaction costs for each trade
are the maximum between proportional transaction costs and a fixed transaction costs. We prove that in the continuous time Black--Scholes (BS) model,
there exists a trading strategy which minimizes the shortfall risk. Furthermore, we use binomial models in order to provide numerical schemes for the calculation
of the shortfall risk and the corresponding optimal portfolio in the BS model.
\end{abstract}
\subjclass[2010]{91G10, 91G20, 60F15, 60G40, 60G44}
\keywords{game options, transaction costs, hedging with friction, risk minimization}%

\maketitle

\markboth{Y.Dolinsky and Y.Kifer}{Game Options in Markets Imposing Minimal Transaction Costs}
\renewcommand{\theequation}{\arabic{section}.\arabic{equation}}
\pagenumbering{arabic}

\section{Introduction}\label{sec:1}\setcounter{equation}{0}
A game contingent claim (GCC) or game option, which was introduced in
\cite{Ki1}, is defined
as a contract between the seller and the buyer of the option such
that both have the right to exercise it at any time up to a
maturity date (horizon) $T$.
If the buyer exercises the contract
at time $t$ then he receives the payment $Y_t$, but if the seller
exercises (cancels) the contract before the buyer then the latter
receives $X_t$. The difference $\Delta_t=X_t-Y_t$ is the penalty
which the seller pays to the buyer for the contract cancellation.
In short, if the seller will exercise at a stopping time
$\sigma\leq{T}$ and the buyer at a stopping time $\tau\leq{T}$
then the former pays to the latter the amount $H(\sigma,\tau)$
where
$H(\sigma,\tau)=X_{\sigma}\mathbb{I}_{\sigma<\tau}+Y_{\tau}\mathbb{I}_{\tau\leq{\sigma}}$
and we set $\mathbb{I}_{Q}=1$ if an event $Q$ occurs and
$\mathbb{I}_{Q}=0$ if not.

A hedge (for the seller) against a GCC is defined as a pair
$(\pi,\sigma)$ that consists of a self financing strategy $\pi$
and a stopping time $\sigma$ which is the cancellation time
for the seller whom we also call an investor. For more details
 on game options see \cite{Ki2}.

In this paper we study hedging with transaction costs of the following form. If the investor
makes a small trade then he pays a fixed transaction costs,
and if the investor makes a large trade he pays proportional transaction costs. Formally,
for buying (or selling) $\beta\neq 0$ stocks
the transaction costs are given by $\max(\delta,\mu|\beta| S)$
where $\delta>0$, $0<\mu<1$ are constants and $S$ is the stock price
at the moment of the trade.
The investor's total transaction costs should be finite, hence in our setup
the investor can trade only a finite (random) number of times.
Although this type of transaction costs is very natural and widespread it was not
 studied much so far.

In \cite{D1} it was proved that super--replication
of game options under proportional transaction costs is expensive and leads to trivial (buy--and--hold) strategies.
Since the friction in our setup is larger than the friction
in the proportional costs setup, then similar results hold true for our case as well.
Therefore, with the presence of transaction costs,
it is reasonable to assume that the seller's (investor's) initial
capital is less than the superhedging price, and so hedging with risk comes into the picture.
We deal with a certain type of risk called the
shortfall risk, which is the maximal expectation
with respect to the buyer exercise times of the shortfall. For the definition of
the shortfall risk measure for game options see \cite{DK1} and \cite{DK2}.

There are several papers which study shortfall minimization with friction (see for instance,
\cite{D}, \cite{G1}, \cite{G2}, \cite{Ka1} and \cite{Ka2}). All of these papers considered the proportional
transaction costs setup (for European and American options)
for which they proved an existence of
an optimal hedge.

In real market conditions transaction costs
generally contain a fixed component, i.e. the transaction costs are bounded from below by a positive constant.
Many authors considered utility maximization under transaction costs with a fixed component. For details see
 (\cite{AJS}, \cite{EH}, \cite{Ko}, \cite{LMW}, \cite{MP} and \cite{OP}).
 However for partial hedging of derivative securities
this setup was not studied before.

In this paper we consider a game option in the BS model with continuous path dependent payoffs.
Our first result says that for our type of frictions there is an optimal hedge. In general, the problem of the existence
of an optimal hedge for the shortfall risk measure in a game options setup is much more complicated than for European and American options.
The reason is that for game options the shortfall risk measure fails to be a convex functional of the portfolio strategy,
and so the compactness principle
which relies on the Komlos lemma can not
applied here. This is the principle which applied for European and
American options in all of the mentioned above papers.
For the type of transaction costs considered here convexity arguments do not work for any
type of options, and so our result is new even for American options (as a special case of game options) though in this case the proof is simpler.
For game options in a setup without friction or with
friction smaller than we consider in this paper,
the existence of a shortfall minimization hedge is an open question.

Our approach is to establish the continuity of the shortfall risk function
and to reduce the optimization problem to a Dynkin game with continuous payoffs.
Then we apply Dynkin games theory in Brownian setup and use the fact
that the set of all permissible trades is compact.

Next, we deal with the computational aspect of shortfall risk minimization. We employ
an appropriate sequence of binomial models in order to
approximate the shortfall risk and to construct ”almost” optimal portfolios in the
BS model. For Lipschitz continuous (may be path dependent) payoffs we obtain two sided error estimates
for the binomial approximations.
So far, shortfall risk approximations for game options were considered only
in the frictionless setup (see \cite{DK2}), where we obtained only one sided error estimates.
In general, in the presence of friction which is no less than proportional transaction costs, we can find a uniform bound for
the growth of \textit{admissible} portfolios. This bound is
essential for establishing the error estimates.

For game options in binomial models the shortfall risk and the corresponding optimal
portfolio can be calculated by a dynamical programming algorithm.
Thus, these approximation theorems provide an efficient tool for numerical calculation
of the shortfall risk and the corresponding optimal hedge in the BS model.

Main results of this paper are formulated in the next section.
In Section \ref{sec4} we prove the existence of an optimal hedge (Theorem \ref{thm2.1}).
In Section \ref{sec2+} we prove the approximation results (Theorem \ref{thm2.2}). The
proof of both Theorems \ref{thm2.1} and \ref{thm2.2} rely on some regularity properties
of the shortfall risk whose proofs we postpone till Section \ref{sec3}.

\section{Preliminaries and Main Results}
\label{sec:2}
\setcounter{equation}{0}

Consider a complete probability space
($\Omega, \mathcal{F}, \mathbb P)$ together with a standard
one--dimensional Brownian motion
\{$W_t\}_{t=0}^\infty$, and the filtration
$\mathcal{F}_t=\sigma{\{W_s|s\leq{t}\}}$ completed by the null sets.
Our BS financial market consists of a safe asset $B$ used
as numeraire, hence $B\equiv 1$,
and of a risky asset $S$ whose value at time $t$ is given by
\begin{equation}\label{2.new}
S^{(s)}_t=s\exp(\kappa W_t+(\vartheta-\kappa^2/2) t), \ \ s>0, \ \ t\geq 0
\end{equation}
where $\kappa>0$ is called volatility and $\vartheta\in\mathbb{R}$ is
another constant. It is well known that for the BS model there exists a unique probability measure $\mathbb{Q}\sim\mathbb{P}$
such that the stock price process $S^{(s)}$ is a $\mathbb Q$ martingale. Using standard arguments it follows
that the restriction of
the probability measure $\mathbb{Q}$ to
the $\sigma$--algebra
$\mathcal{F}_t$ satisfies
\begin{equation}\label{2.martingale}
Z_t:=\frac{d\mathbb Q}{d\mathbb P}|\mathcal{F}_t
=\exp\left(-\frac{\vartheta}{\kappa} W_t-\frac{1}{2}\left(\frac{\vartheta}{\kappa}\right)^2 t\right).
\end{equation}

Next, let $T<\infty$ and let $C[0,T]$ be the space of all continuous functions
$f:[0,T]\rightarrow\mathbb R$ equipped with the uniform topology. Denote by $C_{++}[0,T]\subset C[0,T]$
the subset of all strictly positive functions.
Let $F,G:C[0,T]\rightarrow C[0,T]$
be continuous progressively measurable functions which means
that for any $t\in [0,T]$ and $x,y\in C[0,T]$,
$G(x)_{[0,t]}=G(y)_{[0,t]}$ and
$F(x)_{[0,t]}=F(y)_{[0,t]}$ if
$x_{[0,t]}=y_{[0,t]}.$
We assume that $F\leq G$ and there exist constants $C,p>0$ for which
\begin{equation}\label{2.3}
||F(x)||+||G(x)||\leq C(1+||x||^p)
\end{equation}
where $||\cdot||$ denotes the sup norm on the space $C[0,T]$.
Consider a game option with maturity date $T$ and continuous payoffs which are given by
$$Y^{(s)}_t=[F(S^{(s)})](t)\leq [G(S^{(s)})](t)=X^{(s)}_t.$$

In our model, purchase and sale of
the risky asset are subject to transactions costs which are the maximum
of a constant fee and a proportional transaction cost. Namely,
if the investor
buys (or sells) $\beta$ stocks
then his transaction costs are given by
$$g(\beta,S):=\max(\delta,\mu|\beta| S)\mathbb I_{\beta\neq 0}$$
where $\delta>0$, $0<\mu<1$ are constants and $S$ is the stock price
at the moment of trade.
Presence of this minimal transaction cost yields that in order to avoid infinite transaction costs
portfolios can only be rebalanced finitely many (but a random number of) times.

Next, we define hedging and shortfall risk in the above setup.
A (self financing) trading strategy with an initial position $(z,y)$ is a triple $\pi=(z,y,\gamma)$
where $z$ is the cash value of the portfolio at the initial time, $y$
is the number of
stocks at this moment, and $\gamma={\{\gamma_t\}}_{t=0}^T$ is
an adapted, left continuous, pure jump process with finite (random) number of jumps and initial
value $\gamma_0=y$.
The random variable $\gamma_t$
denotes the number of shares in the portfolio $\pi$
at time $t$ before any change is made at this time (which
 is the reason why we assume that the process $\gamma$ is left continuous).
Observe that at time $0$ the investor has the value $z+g(y,s)-ys$ on his savings account.
Thus the portfolio (cash) value of a trading strategy $\pi$ at time $t$ is given by
\begin{equation}\label{2.4}
V^{\pi}_t=z+\int_{0}^t \gamma_u dS^{(s)}_u+g(y,s)-
g(\gamma_t,S^{(s)}_t)
-\sum_{u\in [0,t)}g(\gamma_{u+}-\gamma_u, S^{(s)}_u),
\end{equation}
where in the last sum there is only finitely many terms
which are not equal to zero.
A portfolio $\pi$ will be called \textit{admissible} if $V^\pi_t\geq 0$ for any $t$.
 A hedge consists of a trading strategy and a cancellation time. Thus, formally
 a hedge with initial position $(z,y)$ is a pair $(\pi,\sigma)$ such that
 $\pi$ is an \textit{admissible} portfolio and $\sigma\leq T$ is a stopping time (with respect to the Brownian filtration).
From (\ref{2.4}) it follows that for an \textit{admissible} portfolio $\pi$ the
stochastic process $V^\pi_t$, $t\geq 0$ is a super--martingale
with respect to the martingale measure $\mathbb Q$.
The set of all
hedges with initial position $(z,y)\in\mathbb{R}_{+}\times\mathbb{R}$ will be denoted by
$\mathcal{A}(T,s,z,y)$. The set of all hedges will be denoted by
$\mathcal{A}(T,s)$, where
$s$ is the initial stock price and
$T$ is the maturity date.

Next, we define the shortfall risk. Denote by $\mathcal{T}_{T}$ the set of all stopping times less or equal than $T$.
 For a hedge $(\pi,\sigma)$ the shortfall risk is defined by
\begin{equation*}\label{2.5}
\cR(T,s,\pi,\sigma)=\sup_{\tau\in\mathcal T_{T}} \mathbb{E}_{\mathbb P}\left(X^{(s)}_{\sigma}\mathbb{I}_{\sigma<\tau}+
Y^{(s)}_{\tau}\mathbb{I}_{\tau\leq\sigma}-V^\pi_{\sigma\wedge\tau}\right)^{+},
\end{equation*}
which is the maximal possible expectation with respect to the
probability measure $\mathbb P$ of the shortfall.
The shortfall risk for an initial position $(z,y)$ is given by
\begin{equation*}\label{2.6}
R(T,s,z,y)=\inf_{(\pi,\sigma)\in\mathcal{A}(T,s,z,y)}
\cR(T,s,\pi,\sigma).
\end{equation*}
The following theorem says that for a given initial position $(z,y)$ there exists a hedge which minimizes the shortfall risk.
\begin{thm}\label{thm2.1}
Let $(z,y)\in\mathbb{R}_{+}\times\mathbb R$ be an initial position.
There exists a hedge (may be not unique) $(\hat\pi,\hat\sigma)\in \mathcal A(T,s,z,y)$ such that
$\cR(T,s,\hat\pi,\hat\sigma)=R(T,s,z,y).$
\end{thm}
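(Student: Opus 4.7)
The plan is to take a minimizing sequence $(\pi_n,\sigma_n)\in\cA(T,s,z,y)$ and extract a subsequential limit $(\hat\pi,\hat\sigma)$ that realizes the infimum. Since $\cR$ is not convex in $\pi$ (the fixed component $\delta$ spoils convexity), the Komlos-type arguments used in the proportional-cost literature are unavailable; the replacement strategy is to combine topological compactness of admissible portfolios with the continuity of the shortfall risk (to be proved in Section \ref{sec3}) and classical Dynkin-game theory in the Brownian filtration. Concretely, I will split
\[
R(T,s,z,y)=\inf_{\pi}\Bigl(\inf_{\sigma}\sup_{\tau}\mathbb{E}_{\mathbb{P}}\bigl[(H(\sigma,\tau)-V^{\pi}_{\sigma\wedge\tau})^{+}\bigr]\Bigr),
\]
where for each fixed admissible $\pi$ the inner $\inf_{\sigma}\sup_{\tau}$ is a zero-sum Dynkin game. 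The payoff is continuous in $(\sigma,\tau)$ and, by (\ref{2.3}) together with admissibility of $\pi$, integrable, so standard Dynkin-game theory in the Brownian setup produces a saddle point $(\hat\sigma(\pi),\hat\tau(\pi))$. The problem thereby reduces to an outer minimization of $\pi\mapsto\inf_{\sigma}\cR(T,s,\pi,\sigma)$.

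The first real step is to restrict to strategies with a bounded number of trades of bounded size. For any admissible $\pi$ the process $V^{\pi}$ is a nonnegative $\mathbb Q$-supermartingale, so taking $\mathbb E_{\mathbb Q}$ in (\ref{2.4}) yields $\mathbb E_{\mathbb Q}\bigl[\sum_{u\in[0,T)}g(\gamma_{u+}-\gamma_u,S^{(s)}_u)\bigr]\leq z+g(y,s)$. Because every nonzero trade costs at least $\delta$, the total number of trades $N^{\pi}$ satisfies $\mathbb E_{\mathbb Q}[N^{\pi}]\leq (z+g(y,s))/\delta$, and a parallel estimate controls $\mathbb E_{\mathbb Q}\bigl[\sum_u\mu|\gamma_{u+}-\gamma_u|S^{(s)}_u\bigr]$. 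Truncating an admissible strategy after its $M$-th trade and capping individual jumps of notional exceeding $K$ yields an admissible strategy whose portfolio process differs from the original by a term that vanishes in probability as $M,K\to\infty$; combined with the continuity of $\cR$, this shows that the infimum defining $R$ is unchanged when restricted to such bounded strategies.

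For the compactness, a bounded strategy is parameterized by the ordered stopping times $0\leq\tau_1\leq\cdots\leq\tau_M\leq T$ and the uniformly bounded, $\mathcal F_{\tau_i}$-measurable jump sizes $\Delta_i$. Given a minimizing sequence of bounded strategies, I extract subsequential limits of the stopping times (via Baxter--Chacon/Helly-type compactness, which is well suited to the Brownian filtration) and of the jump sizes (via weak $L^{2}(\mathbb Q)$ compactness, possibly after conditional-expectation surgery to restore $\mathcal F_{\tau_i}$-measurability in the limit). These limits define a candidate $\hat\pi$; continuity of $V^{\pi}$ in the portfolio data ensures admissibility of $\hat\pi$, while continuity of $\pi\mapsto\inf_{\sigma}\cR(T,s,\pi,\sigma)$---derived from the shortfall-risk regularity in Section \ref{sec3} combined with standard stability of Dynkin games under continuous convergence of payoffs---lets me pass to the limit:
\[
\cR(T,s,\hat\pi,\hat\sigma(\hat\pi))\leq\liminf_{n}\cR(T,s,\pi_n,\sigma_n)=R(T,s,z,y).
\]

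The main obstacle is precisely this compactness step. The nonconvexity of $\cR$ rules out the convex-combination techniques of the proportional-cost literature, and the jump sizes live in an infinite-dimensional space indexed by random stopping times, so the compact parameterization must be handled delicately. The resolution is to exploit the very feature that destroys convexity---the minimum cost $\delta$ per trade, which bounds the number of rebalancings---to extract a topological compactness statement strong enough to pair with continuity (rather than convexity) of the shortfall risk.
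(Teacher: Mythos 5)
Your overall plan --- minimizing sequence, compactness of the strategy space, continuity of the risk --- is not the route the paper takes, and the central step of your argument has a genuine gap: the passage to the limit along the minimizing sequence. Baxter--Chacon compactness yields limits that are in general \emph{randomized} stopping times, and, more seriously, weak $L^{2}(\mathbb Q)$ convergence of the jump sizes $\Delta_i^n$ does not make $V^{\pi_n}$ converge to $V^{\hat\pi}$: both the stochastic integral $\int_0^t\gamma^n_u\,dS^{(s)}_u$ (which depends on products of the jump sizes with indicators of the trading times) and the transaction-cost terms $g(\Delta_i^n,S^{(s)}_{\tau_i^n})=\max(\delta,\mu|\Delta_i^n|S^{(s)}_{\tau_i^n})$ fail to be continuous under weak convergence. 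In the convex setting one repairs this via Mazur/Komlos by passing to convex combinations, but that is exactly what the fixed cost $\delta$ forbids here (convex combinations of admissible strategies need not be admissible, and $\cR$ is not convex). Without either convexity or strong convergence, the inequality $\cR(T,s,\hat\pi,\hat\sigma(\hat\pi))\leq\liminf_n\cR(T,s,\pi_n,\sigma_n)$ is unsupported, and admissibility ($V^{\hat\pi}\geq 0$) of the limit is likewise not guaranteed; the ``conditional-expectation surgery'' you invoke would itself destroy the pathwise structure needed to define $V^{\hat\pi}$.

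The compactness the paper actually uses is finite-dimensional: it is the compactness of the set $\Gamma(S,z,y)\subset\mathbb R$ of permissible \emph{single} trades, entering through Proposition \ref{essential}, which gives continuity of $R$ and a measurable optimal one-step trade $\beta^{*}$ attaining the infimum in (\ref{4.new+}). The existence proof is then by dynamic programming and verification rather than by extraction of a limit: Lemma \ref{lem4.1} bounds $R$ from below by the value of a Dynkin game whose lower payoff is $(Y-\mathbf V)^{+}$ and whose upper payoff is the continuous process $\hat R(\mathbf T-\cdot,\ldots)$; the candidate hedge is built recursively by playing this game between consecutive trades (the $k$-th trading time $\hat\sigma_k$ is the hitting time of the upper obstacle by the game value $\mathbf R^{(k)}$, and the $k$-th trade is $\hat\beta_k=\beta^{*}(\cdots)$); and Lemma \ref{lem4.2} verifies optimality by induction on the number of trades together with a supermartingale bound showing $\hat N<\infty$ a.s. If you wish to salvage your approach you would need strong (not weak) convergence of the value processes along the minimizing sequence, which is precisely what is unavailable; the recursive construction is the way around this obstruction.
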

Next, we approximate the shortfall risk in the Black--Scholes model by a sequence of binomial models. In order to obtain error estimates we will
assume that the functions $F,G$
can be extended to the space $D[0,T]$ (of all $c\grave{a}dl\grave{a}g$ functions on the interval $[0,T]$)
and satisfy the following Lipschitz condition. There exists a constant $L$ such that
for any $0\leq t_1<t_2\leq T$ and $x,y \in  D[0,T]$,
\begin{eqnarray}\label{condition}
&||F(y)-F(x)||+||G(y)-G(x)||\leq L ||y-x|| \ \ \mbox{and}  \\
&|F(x)(t_2)-F(x)(t_1)|+|G(x)(t_2)-G(x)(t_1)|\leq \nonumber\\
&L \left((t_2-t_1)(1+||x||)+\sup_{t_1\leq u\leq t_2}|x(u)-x(t)|\right).
\nonumber
\end{eqnarray}
For any $n$ consider a
binomial model which consists
of a savings account $\equiv 1$,
and of a piecewise constant risky asset $\{S^{n,s}_t\}_{t=0}^T$ given by
\begin{equation*}\label{2.7}
S^{n,s}_t=s\exp\left(\kappa\sqrt{\frac{T}{n}}\sum_{i=1}^{[nt/T]}\xi_i\right), \ \ t\in [0,T]
\end{equation*}
where $[\cdot]$
denotes the integer part  and
$\xi_1,\xi_2,...$ are i.i.d. random variables taking values 1
and $-1$ with probabilities
$p^{(n)}=\left(\exp((\kappa-\frac{2\vartheta}{\kappa})\sqrt{\frac{T}{n}})+1\right)^{-1}$
and
$1-p^{(n)}=\left(\exp((\frac{2\vartheta}{\kappa}-\kappa)\sqrt{\frac{T}{n}})+1\right)^{-1}$,
respectively.
Let $\mathbb P_n$
be the corresponding probability measure and let $\mathcal{F}^{(n)}_t=\sigma\{S^{n,s}_u:u\leq t\}$ be
the filtration which is generated by $S^{n,s}$.

Denote by $\mathcal{A}^{(n)}(T,s,z,y)$ the set of all hedges with an initial position
$(z,y)$. The definition of a hedge is done in analogous way to the Black--Scholes model,
just replace the Brownian filtration by $\mathcal F^{(n)}$ and
$S^{(s)}$ by $S^{n,s}$ in (\ref{2.4}).

We introduce game options with the piecewise constant payoffs
$$Y^{n,s}_t=[F(S^{n,s})]([nt/T]T/n)\leq X^{n,s}_t=[G(S^{n,s})]([nt/T]T/n), \ \ t\in [0,T].$$
Define the shortfall risk
\begin{equation*}
\cR_n(T,s,\pi,\sigma)=\sup_{\tau\in\mathcal T^{(n)}_T} \mathbb{E}_{\mathbb P_n}\left(X^{n,s}_{\sigma}\mathbb{I}_{\sigma<\tau}+
Y^{n,s}_{\tau}\mathbb{I}_{\tau\leq\sigma}-V^\pi_{\sigma\wedge\tau}\right)^{+},
\end{equation*}
where $\mathcal{T}^{(n)}_T$ is the set of all stopping times less than $T$.
The shortfall risk for an initial position $(z,y)$ is given by
$R_n(T,s,z,y)=\inf_{(\pi,\sigma)\in\mathcal{A}^{(n)}(T,s,z,y)}
\cR_n(T,s,\pi,\sigma).$

Next, we introduce a simple form of Skorokhod embedding which allows
to consider the above binomial markets
and the BS model on the same probability space.
Set $W^{*}_t=\frac{\ln S^{(s)}_t-\ln s }{\kappa}$, $t\geq 0$,
and for any $n\in\mathbb{N}$ define
recursively
$\theta^{(n)}_0=0$, $\theta^{(n)}_{k+1}=\inf{\{t>\theta^{(n)}_k
:|W^{*}_t-W^{*}_{\theta^{(n)}_k}|=\sqrt\frac{T}{n}\}}$.
Observe (see \cite{DK2}) that for any $k$,
$W^{*}_{\theta^{(n)}_{k+1}}-W^{*}_{\theta^{(n)}_k}$ is independent
of $\mathcal{F}_{\theta^{(n)}_k}$ and takes on the values $\sqrt\frac{T}{n}$ and
$-\sqrt\frac{T}{n}$, with probabilities $p^{(n)}$
and $1-p^{(n)}$, respectively. For any $n$, define the map
$\Pi_n:L^{\infty}(\mathcal{F}^{(n)}_T,\mathbb P_n)\rightarrow{L^{\infty}
(\mathcal{F}_{\theta^{(n)}_n} ,\mathbb P)}$
by $\Pi_n(U)=\tilde U$ so
that if $U=f\bigg(\sqrt{\frac{T}{n}}\xi_1,...,\sqrt{\frac{T}{n}}\xi_n\bigg)$
for a function $f$ on $\{\sqrt{\frac{T}{n}},-\sqrt{\frac{T}{n}}\}^n$
then $\tilde
U=f(W^{*}_{{\theta^{(n)}_1}},W^{*}_{{\theta^{(n)}_2}}-W^{*}_{{\theta^{(n)}_1}},...,
W^{*}_{{\theta^{(n)}_n}}-W^{*}_{{\theta^{(n)}_{n-1}}}).$

The map $\Pi_n$ allows to lift hedges from the binomial models to the
BS model.
For an initial position $(z,y)$ denote by $\mathcal{A}^{W,n}(s,z,y)$ the set of all
\textit{admissible} self financing strategies which are
managed on the set $\{0,\theta^{(n)}_1,...,\theta^{(n)}_n\}$ such that after the time $\theta^{(n)}_n$
the number of stocks in the portfolio is $0$.
Namely,
$\pi=(z,y,\{\tilde\gamma_t\}_{t=0}^\infty)\in\mathcal{A}^{W,n}(s,z,y)$
if $\tilde\gamma_t$ is constant on the interval $(\theta^{(n)}_k,\theta^{(n)}_{k+1}]$, $k<n$
and $\tilde\gamma_t\equiv 0$ on $(\theta^{(n)}_n,\infty)$. The portfolio value is given by (\ref{2.4}).
Define the lifting $\Psi_n:\mathcal{A}^{(n)}(T,s,z,y)\rightarrow \mathcal{A}^{W,n}(s,z,y)\times\mathcal{T}_T$,
$\Psi_n(\pi,\sigma)=(\tilde\pi,\tilde\sigma)$ as follows.
Let $\pi=(z,y,\gamma)$. Then $\tilde\pi=(z,y,\tilde\gamma)$
where
\begin{equation*}
\tilde\gamma_t=y\mathbb{I}_{t=0}+\sum_{i=0}^{n-1} \mathbb{I}_{\theta^{(n)}_i< t\leq\theta^{(n)}_{i+1}}\Pi_n(\gamma_{(i+1)T/n}).
\end{equation*}
Similar arguments as in \cite{D} (see Section 2 there) yield that
\begin{equation*}
V^{\tilde\pi}_{\theta^{(n)}_k}=\Pi_n(V^{\pi}_{kT/n})  \ \ k=0,1,...,n
\end{equation*}
and $V^{\tilde\pi}_t\geq 0$, $t\geq 0$.
Furthermore, the portfolio value
$V^{\tilde\pi}_t$ is constant after $\theta^{(n)}_n$. Observe that if we restrict the portfolio $\tilde\pi$ to the interval $[0,T]$ we
get an element in $\mathcal{A}(T,s,z,y)$.
Next, we
define $\tilde\sigma\in\mathcal{T}_{T}$ by
\begin{equation*}
\tilde\sigma=T\wedge\theta^{(n)}_{\Pi_n(\sigma)}\,\,\mbox{if}\,\,
\Pi_n(\sigma)<n  \ \mbox{and}\,\, \tilde\sigma=T\,\,\mbox{if}\,\,
\Pi_n(\sigma)=n.
\end{equation*}
The following theorem says that the shortfall risk in the BS model can
be approximated by the shortfall risks
in the binomial models defined above.
Furthermore, by lifting the optimal hedges in the binomial models we get "almost" optimal hedges in the BS model.
\begin{thm}\label{thm2.2}
Let $(z,y)\in \mathbb{R}_{++}\times\mathbb{R}$ be an initial position.
There exists a constant $C>0$ such that for any $n\in\mathbb N$
$$|R(T,s,z,y)-R_n(T,s,z,y)|\leq C n^{-1/4} (\ln n)^{3/4}.$$
Furthermore, let $(\pi_n,\sigma_n)\in\mathcal{A}^{(n)}(T,s,z,y)$ be an optimal hedge, i.e.
$\cR_n(T,s,\pi_n,\sigma_n)=R_n(T,s,z,y)$.
Then for the hedges $(\tilde\pi_n,\tilde\sigma_n)=\Psi_n(\pi_n,\sigma_n)$, $n\in\mathbb N$,
we have
$$\cR(T,s,\tilde\pi_n,\tilde\sigma_n)\leq R(T,s,z,y)+ C n^{-1/4} (\ln n)^{3/4},$$
where in the above right hand side we take the restriction of
$\tilde\pi_n$ to the interval [0,T].
\end{thm}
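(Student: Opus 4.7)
The plan is to use the Skorokhod embedding introduced in the preliminaries to couple the BS model and the binomial models on the same probability space, and then to relate the two shortfall risks by lifting binomial hedges via $\Psi_{n}$ in one direction and projecting BS hedges to the binomial filtration in the other. In both directions, the payoff discrepancy is absorbed via the Lipschitz assumption (\ref{condition}) applied to $S^{(s)}$ and $\Pi_{n}(S^{n,s})$, while the portfolio-value discrepancy is absorbed by moment bounds on the positions of admissible portfolios.

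First I would establish the standard Skorokhod estimates
\begin{equation*}
\mathbb{E}\bigl[\max_{0\leq k\leq n}|\theta^{(n)}_{k}-kT/n|^{2}\bigr]^{1/2}\leq C\,n^{-1/2}\sqrt{\ln n}
\end{equation*}
and, via Lévy's modulus of continuity for Brownian motion,
\begin{equation*}
\mathbb{E}\bigl[\sup_{0\leq t\leq T}|W^{*}_{t}-W^{*}_{\theta^{(n)}_{[nt/T]}}|^{p}\bigr]^{1/p}\leq C_{p}\,n^{-1/4}(\ln n)^{3/4},
\end{equation*}
and deduce the same rate for $\|S^{(s)}-\Pi_{n}(S^{n,s})\|$ in any $L^{p}$ norm. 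I would also need uniform $L^{p}$ bounds on the positions $\gamma_{t}$ of admissible portfolios starting from initial wealth $z$. As hinted at in the introduction, the minimal fee $\delta>0$ bounds the $L^{p}$ norm of the number of trades, and the inequality $g(\gamma,S)\geq\mu|\gamma|S$ combined with the $\mathbb{Q}$--supermartingale property of $V^{\pi}$ forces $|\gamma_{t}|S^{(s)}_{t}$ to be uniformly $L^{p}$--bounded in terms of $z$. These bounds are precisely what makes two-sided estimates possible here, in contrast to the frictionless case.

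For the inequality $R(T,s,z,y)\leq R_{n}(T,s,z,y)+Cn^{-1/4}(\ln n)^{3/4}$ together with the second statement, I would work directly with $(\tilde\pi_{n},\tilde\sigma_{n})=\Psi_{n}(\pi_{n},\sigma_{n})$. The portfolio values agree at the embedding grid by the identity preceding the theorem, and between grid points the discrepancy is controlled by $|\gamma|\cdot\|S^{(s)}-\Pi_{n}(S^{n,s})\|$. Applying (\ref{condition}) yields $L^{p}$-bounds of the order $n^{-1/4}(\ln n)^{3/4}$ for the sup-norm differences of $F(S^{(s)})-F(\Pi_{n}(S^{n,s}))$ and the analogous expression for $G$. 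Using $(a-b)^{+}\leq(a-c)^{+}+|b-c|$ inside the shortfall expectation, controlling $|\tilde\sigma_{n}-\sigma_{n}|$ by the embedding time estimate above, and invoking the modulus of continuity of $X^{(s)},Y^{(s)}$ produced by the second line of (\ref{condition}) gives
\begin{equation*}
\cR(T,s,\tilde\pi_{n},\tilde\sigma_{n})\leq\cR_{n}(T,s,\pi_{n},\sigma_{n})+C\,n^{-1/4}(\ln n)^{3/4}.
\end{equation*}
For the reverse inequality I would take the optimal BS hedge $(\hat\pi,\hat\sigma)$ from Theorem \ref{thm2.1}, discretize the finitely many trade times of $\hat\gamma$ by rounding each to the nearest grid point $kT/n$, and replace the resulting positions by their $\mathcal{F}^{(n)}_{kT/n}$-conditional expectations to obtain an $\mathcal{F}^{(n)}$-adapted pure-jump process. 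A small adjustment of initial cash of order $n^{-1/4}(\ln n)^{3/4}$ restores admissibility in the binomial model, and the continuity of $R_{n}$ in the cash coordinate combined with the payoff/portfolio estimates above yields $R_{n}(T,s,z,y)\leq R(T,s,z,y)+Cn^{-1/4}(\ln n)^{3/4}$.

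The main obstacle is the reverse direction: the Brownian filtration is strictly finer than $\mathcal{F}^{(n)}$, and discretizing the jump times of $\hat\gamma$ while preserving admissibility is delicate because the fixed cost $\delta$ cannot be absorbed by splitting a trade into smaller pieces. Bounding the required cash adjustment, and then absorbing it through a continuity property of the shortfall risk in the initial cash and position coordinates (one of the regularity results deferred to Section \ref{sec3}), is where the final rate $n^{-1/4}(\ln n)^{3/4}$ is fixed and is the technical heart of the argument.
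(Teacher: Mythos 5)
Your overall architecture — Skorokhod embedding to couple the two models, lifting binomial hedges for one inequality, discretizing a near-optimal BS hedge for the other, $L^2(\mathbb Q)$ bounds on positions via the proportional cost component, and absorbing a small cash adjustment through the H\"older continuity of the shortfall risk in the initial position — is the paper's strategy. The lifting direction and the payoff estimates at rate $n^{-1/4}(\ln n)^{3/4}$ are handled essentially as in Lemma \ref{lem2+.1} and the lemma following it. However, your reverse direction contains a genuine gap.

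First, moment bounds on $\gamma$ are not enough to ``restore admissibility by a small cash adjustment.'' Admissibility is the \emph{pathwise} constraint $V^\pi_t\geq 0$, and the discretization error between a BS hedge and its grid version is of order $\sqrt{T/n}$ times the realized trading volume $\max_t|\gamma_t|S^{(s)}_t+\int_0^T S^{(s)}_u|d\gamma_u|$, which is unbounded. A deterministic cash cushion can only dominate this error on an event where the volume is pathwise bounded. The paper therefore first \emph{truncates}: it liquidates the portfolio at the stopping time where the cumulative volume exceeds $n^{1/7}$, shows via Lemma \ref{lem.growth} and Chebyshev that this event has $\mathbb Q$-probability $O(n^{-2/7})$, and controls the resulting loss in risk by H\"older's inequality at rate $O(n^{-1/4})$; only then does a cash reduction of $n^{-1/3}$ (which costs $O(n^{-1/4})$ in risk by Lemma \ref{lem3.2}) dominate the discretization error $O(n^{1/7}\cdot n^{-1/2})$. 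Without this truncation step your argument does not close, and the exponents must be balanced exactly as above to land on the stated rate.

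Second, your device of ``rounding each trade time to the nearest grid point and replacing the positions by their $\mathcal F^{(n)}_{kT/n}$-conditional expectations'' does not work as stated. Rounding to the \emph{nearest} grid point can move a trade backward in time and destroy adaptedness; the discretization must delay, i.e.\ hold the position $\dot\gamma_{\theta^{(n)}_i}$ on $(\theta^{(n)}_i,\theta^{(n)}_{i+1}]$, and one then checks by summation by parts (using that merging several trades into one only decreases the cost $g$) that the grid portfolio's value dominates the original one at grid times. Moreover, taking conditional expectations of $\gamma$ does not preserve admissibility or the self-financing dynamics, because the transaction cost $g(\beta,S)=\max(\delta,\mu|\beta|S)\mathbb I_{\beta\neq 0}$ is nonlinear in $\beta$ and the constraint $V^\pi\geq 0$ is not stable under conditioning. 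The paper avoids this entirely: the discretized hedge lives in $\mathcal A^{W,n}$ (adapted to the Brownian filtration sampled at the embedding times), and the passage to the genuine binomial model is the separate identity $\hat R_n=R_n$ of (\ref{2+.new}), established by the distributional arguments of the cited earlier works rather than by projecting portfolios onto the coarser filtration.
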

\begin{rem}
Theorems \ref{thm2.1}
--\ref{thm2.2} can be extended to the case where the constant component $\delta$ becomes  a Lipschitz continuous function of time $\delta(t)$. This makes sense since then we can consider constant minimal transaction cost $\delta$ with respect to the original currency where measured by numeraire $\delta(t)=\delta e^{rt}$ with $r$ being the interest rate.
In this case the portfolio value process may no longer be a super--martingale under the martingale measure $\mathbb Q$. The super--martingale property is
used in Lemma \ref{lem3.2}. However, by applying the growth results obtained in Lemma \ref{lem.growth} we can still prove the statement of Lemma \ref{lem3.2}.
In this case the proofs of both Theorems \ref{thm2.1} and \ref{thm2.2} will become more technical and somewhat unwieldy. The first reason for these complications is that
in a setup where the constant component depends on time, the lifting of the hedges more involved and it requires truncation of the portfolio in order to keep the \textit{admissability} condition. The second reason is that
Lemma 4.3 requires a new proof, since the portfolio strategy is not a super--martingale. Both of these steps can be done using Lemma \ref{lem.growth} but in order to simplify exposition we will deal here only with the case where $\delta$ is constant with respect
to the numeraire.
\end{rem}

\section{Proof of Theorem \ref{thm2.1}}\label{sec4}
We start with some preparations.
For any $\mathbf T\leq T$
and $v\in C_{++}[0,T]$ define the continuous stochastic process $\{S^{\mathbf T,v}_t\}_{t=0}^T$ by
$$S^{\mathbf T,v}_t=v_{T-\mathbf T}\mathbb{I}_{t\leq T-\mathbf T}+S^{(v_{T-\mathbf T})}_{t+\mathbf T-T}\mathbb{I}_{t>T-\mathbf T}.$$
Namely, $S^{\mathbf T,v}$ coincides with $v$ on the interval $[0,T-\mathbf T]$ and
$S^{\mathbf T,v}_t$ is a geometric Brownian motion for $t>T-\mathbf T$.
Consider
a cash settled game option with a maturity date $\mathbf T<\infty$ defined in a BS financial market
which is described in Section \ref{sec:2}.
The payoffs are given by
$$Y^{\mathbf T,v}_t=[F(S^{\mathbf T,v})](t+T-\mathbf T) \ \ \mbox{and} \ \
X^{\mathbf T,v}_t=[G(S^{\mathbf T,v})](t+T-\mathbf T) \ \  t\in [0,\mathbf T].$$
Observe that the processes
$X^{\mathbf T,v}_t\geq Y^{\mathbf T,v}_t$, $t\in [0,\mathbf T]$ are continuous and adapted. Furthermore, if $\mathbf T=T$ then
$X^{\mathbf T,v}_t=X^{(v_0)}_t$ and $Y^{\mathbf T,v}_t=Y^{(v_0)}_t$ for $t\in [0,T]$.

Next, we define the shortfall risk for a maturity $\mathbf T\leq T$.
The sets $\mathcal{A}(\mathbf T,s,z,y)$ and $\mathcal T_{\mathbf T}$ are
defined as in Section \ref{sec:2}, just replace $T$ by $\mathbf T$.
For $v\in C_{++}[0,T]$ and a hedge $(\pi,\sigma)\in \mathcal{A}(\mathbf T,v_{T-\mathbf T},z,y)$ the shortfall risk is defined by
$$\cR(\mathbf T,v,\pi,\sigma)=\sup_{\tau\in\mathcal T_{\mathbf T}}\mathbb{E}_{\mathbb P}\left(X^{\mathbf T,v}_{\sigma}\mathbb{I}_{\sigma<\tau}+
Y^{\mathbf T,v}_{\tau}\mathbb{I}_{\tau\leq\sigma}-V^\pi_{\sigma\wedge\tau}\right)^{+}.$$
Similarly to above we set
$R(\mathbf T,v,z,y)=\inf_{(\pi,\sigma)\in\mathcal{A}\left(\mathbf T,v_{T-\mathbf T},z,y\right)}
\cR(\mathbf T,v,\pi,\sigma).$
Observe that $\cR(\mathbf T,v,\cdot,\cdot\cdot)$ and ${R}(\mathbf T,v,\cdot,\cdot\cdot)$ depend only on
$v_{[0,T-\mathbf T]}$. Furthermore for $\mathbf T=T$,
if $v_0=s$ then $\cR(T,v,\cdot,\cdot\cdot)=\cR(T,s,\cdot,\cdot\cdot)$ and
${R}(T,v,\cdot,\cdot\cdot)=R(T,s,\cdot,\cdot\cdot)$.

Now, assume that at a given time a portfolio value is $z$, the number of stocks is
$y$ and the stock price at this moment is $S$. If the investor buys
$\beta\neq 0$ stocks then the new (cash settled) portfolio value will be
\begin{equation}\label{function}
h(S,z,y,\beta):=z+g(y,S)-g(y+\beta,S)-g(\beta,S).
\end{equation}
For $\beta=0$ we define the function $h$ such that it will be continuous in
$\beta=0$. Thus we set
$h(S,z,y,0)=z-\delta$.
Let $\Gamma(S,z,y)$ be the set of all $\beta$ which satisfy $h(S,z,y,\beta)\geq 0$.
 It is clear that $\Gamma(S,z,y)$ is a
 compact set. Observe also that a portfolio strategy is \textit{admissible}
if and only if it consists of permissible trades.
For $y\neq 0$ we have that $-y\in\Gamma(S,z,y)$, and so the set
$\Gamma(S,z,y)$ is not empty. For $y=0$ the set $\Gamma(S,z,y)$ is empty if and only if $z<\delta$.
Define,
\begin{equation}\label{4.new+}
\hat{R}(\mathbf T,v,z,y)=\min\left((X^{\mathbf T,v}_0-z)^{+},
\inf_{\beta\in \Gamma(v_{T-\mathbf T},z,y) }R\left(\mathbf T,v,h(v_{T-\mathbf T},z,y,\beta),\beta+y\right)\right)
\end{equation}
where the infimum over an empty set is $\infty$.

Next, let $s$ be the initial stock price and $(z,y)$ be the initial position of the investor. Set
$$\mathbf V^{s,z,y}_t=z+g(y,s)+
y(S^{(s)}_t-s)-g(y,S^{(s)}_t), \ \ t\geq 0.$$
Observe that
$\mathbf V^{s,z,y}_t$ is the portfolio value at time $t$ of the investor who did
 not trade until this time.

The following result is the first step in the proof of Theorem \ref{thm2.1}.
\begin{lem}\label{lem4.1}
Let $(\mathbf T,v,z,y)\in[0,T]\times C_{++}[0,T]\times\mathbb{R}_{+}\times\mathbb{R}$
and let $s=v_{T-\mathbf T}$ be the initial stock price.
Define the stopping time
$\Theta=\mathbf T\wedge\inf\{t: \mathbf V^{s,z,y}_t<0\}$.
Then,
\begin{eqnarray*}
&R(\mathbf T,v,z,y)\geq
\inf_{\sigma\in\mathcal{T}_{\mathbf T},\sigma\leq\Theta} \sup_{\tau\in\mathcal{T}_{\mathbf T},\tau\leq\Theta}
\mathbb{E}_{\mathbb P}\bigg(\mathbb{I}_{\tau\leq \sigma}\times\\
&(Y^{\mathbf T,v}_{\tau}-\mathbf V^{s,z,y}_{\tau})^{+}+
\mathbb{I}_{\sigma<\tau}\hat{R}(\mathbf T-\sigma,S^{\mathbf T,v},\mathbf V^{s,z,y}_{\sigma},y)\bigg).\nonumber
\end{eqnarray*}
\end{lem}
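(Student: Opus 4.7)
I would fix an arbitrary admissible hedge $(\pi,\sigma)\in\mathcal A(\mathbf T,s,z,y)$ with $\pi=(z,y,\gamma)$, let $\rho$ be the first jump time of $\gamma$ in $(0,\mathbf T]$ (and $\rho=\mathbf T$ if no such jump exists), and set $\sigma^*:=\sigma\wedge\rho$. Since $\gamma$ is left--continuous with $\gamma_t=y$ on $[0,\rho]$, formula (\ref{2.4}) reduces to $V^\pi_t=\mathbf V^{s,z,y}_t$ on that interval; admissibility then forces $\mathbf V^{s,z,y}_t\geq 0$ for $t<\rho$, hence $\rho\leq\Theta$ and $\sigma^*\leq\Theta$. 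The plan is then to show that for every $\tau'\in\mathcal T_{\mathbf T}$ with $\tau'\leq\Theta$ one has
$\mathcal R(\mathbf T,v,\pi,\sigma)\geq \mathbb E_{\mathbb P}\bigl[\mathbb I_{\tau'\leq\sigma^*}(Y^{\mathbf T,v}_{\tau'}-\mathbf V^{s,z,y}_{\tau'})^+ + \mathbb I_{\sigma^*<\tau'}\hat R(\mathbf T-\sigma^*,S^{\mathbf T,v},\mathbf V^{s,z,y}_{\sigma^*},y)\bigr];$
taking the supremum over $\tau'\leq\Theta$ and then the infimum over $(\pi,\sigma)$ yields the lemma, since $\sigma^*\leq\Theta$ is a feasible seller's strategy in the Dynkin game on the right--hand side.

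To prove the estimate I would choose, for each $\tau'$, a buyer's time $\tau\in\mathcal T_{\mathbf T}$ depending on $\tau'$ for which the shortfall payoff under $(\sigma,\tau)$ dominates the Dynkin payoff under $(\sigma^*,\tau')$, and decompose $\Omega$ into $A_1=\{\tau'\leq\sigma^*\}$, $A_2=\{\sigma<\rho,\ \sigma^*<\tau'\}$ and $A_3=\{\rho\leq\sigma,\ \sigma^*<\tau'\}$. On $A_1$ take $\tau=\tau'$: $\tau'\leq\rho$ yields $V^\pi_{\tau'}=\mathbf V^{s,z,y}_{\tau'}$, and $\tau'\leq\sigma$, so both payoffs equal $(Y^{\mathbf T,v}_{\tau'}-\mathbf V^{s,z,y}_{\tau'})^+$. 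On $A_2$ take $\tau=\tau'$ again: $\sigma^*=\sigma$ and the shortfall equals $(X^{\mathbf T,v}_\sigma-\mathbf V^{s,z,y}_\sigma)^+$; the first branch of (\ref{4.new+}), together with the progressive measurability identity $X^{\mathbf T-\sigma,S^{\mathbf T,v}}_0=X^{\mathbf T,v}_\sigma$ (both values of $G$ being determined by $S^{\mathbf T,v}$ on $[0,T-\mathbf T+\sigma]$, on which interval $S^{\mathbf T-\sigma,S^{\mathbf T,v}}$ coincides with $S^{\mathbf T,v}$), then gives $\hat R(\mathbf T-\sigma,S^{\mathbf T,v},\mathbf V^{s,z,y}_\sigma,y)\leq(X^{\mathbf T,v}_\sigma-\mathbf V^{s,z,y}_\sigma)^+$.

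The core of the proof is the event $A_3$, where I would invoke the strong Markov property of Brownian motion at $\rho$ together with the \emph{infimum} branch in (\ref{4.new+}). Conditional on $\mathcal F_\rho$ and on $\{\rho\leq\sigma\}$, admissibility of $\pi$ forces the trade size $\beta:=\gamma_{\rho+}-y$ to lie in $\Gamma(S^{(s)}_\rho,\mathbf V^{s,z,y}_\rho,y)$, and the restriction of $(\pi,\sigma)$ to $[\rho,\mathbf T]$, reparametrized by the shift, yields an admissible hedge for the game option with maturity $\mathbf T-\rho$, path $S^{\mathbf T,v}$ and initial position $(h(S^{(s)}_\rho,\mathbf V^{s,z,y}_\rho,y,\beta),\,y+\beta)$; here I use that the $\mathcal F_\rho$--measurable segment $S^{\mathbf T,v}|_{[0,T-\mathbf T+\rho]}$ is exactly what $R(\mathbf T-\rho,\cdot,\cdot,\cdot)$ depends on, and that the post--$\rho$ stock price in the original market matches in conditional distribution the stock price in the shifted BS market. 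Hence for any $\varepsilon>0$ a measurable selection produces a buyer's time $\tau>\rho$ whose conditional expected shortfall on $A_3$ is at least $R(\mathbf T-\rho,S^{\mathbf T,v},h(S^{(s)}_\rho,\mathbf V^{s,z,y}_\rho,y,\beta),y+\beta)-\varepsilon$, and the $\inf_\beta$ branch of (\ref{4.new+}) bounds this from below by $\hat R(\mathbf T-\rho,S^{\mathbf T,v},\mathbf V^{s,z,y}_\rho,y)-\varepsilon$. Summing over $A_1,A_2,A_3$, letting $\varepsilon\downarrow 0$, and taking $\sup_{\tau'}$ and $\inf_{(\pi,\sigma)}$ finishes the proof. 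I expect the main obstacle to be case $A_3$: the precise identification of the tail of a generic admissible hedge with an admissible hedge for the time--shifted problem via (\ref{2.4}) and the strong Markov property, and the measurable selection of an $\varepsilon$--optimal buyer's time in the sub--game.
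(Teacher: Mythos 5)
Your proposal is correct and follows the same strategy as the paper's proof: fix an arbitrary hedge, cut it at $\sigma_1=\sigma\wedge\rho$ where $\rho$ is the first trading time (your $\sigma^*$ is exactly the paper's $\sigma_1$), observe that $V^{\pi}=\mathbf V^{s,z,y}$ up to that time so that $\sigma_1\leq\Theta$, and on $\{\sigma_1<\tau\}$ bound the continuation from below by $\hat R$ using the two branches of (\ref{4.new+}) exactly as in your cases $A_2$ (cancellation branch, via $X^{\mathbf T-\sigma,S^{\mathbf T,v}}_0=X^{\mathbf T,v}_{\sigma}$) and $A_3$ (trading branch, via $\beta=\gamma_{\rho+}-y\in\Gamma(S^{(s)}_{\rho},\mathbf V^{s,z,y}_{\rho},y)$ forced by admissibility). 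The only substantive difference is how the dynamic-programming step at the first trade is implemented. The paper introduces the Snell envelope $U_t$ of the shortfall, invokes its c\`adl\`ag regularity and the identity (\ref{4.1++}) from the general optimal stopping theory of \cite{KQ}, uses the Markov property in the form $U_t\geq R(\mathbf T-t,S^{\mathbf T,v},V^{\pi}_t,\gamma_t)$, and then passes to the right limit $t\downarrow\sigma_1$ using the continuity of $R$ (Proposition \ref{essential}) to land on the post-trade position $(V^{\pi}_{\sigma_1+},\gamma_{\sigma_1+})$. You instead paste an explicit buyer's time across $A_1,A_2,A_3$ and argue directly at $\rho+$ by conditioning on $\mathcal F_{\rho}$ and selecting an $\varepsilon$-optimal $\tau$ for the sub-game; this avoids the right-limit/continuity argument, but it requires the pasted time to be a genuine stopping time (it is, since $A_3\in\mathcal F_{\rho}$ and your $\tau_3>\rho$) and a measurable $\varepsilon$-selection together with the identification of the conditional shortfall with the shifted value $R(\mathbf T-\rho,\cdot,\cdot,\cdot)$ --- which is precisely the content the paper compresses into the inequality $U_t\geq R(\mathbf T-t,\cdot,\cdot,\cdot)$. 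Note that even on your route the measurability of $R$ from Proposition \ref{essential} is still needed for $\hat R(\mathbf T-\sigma^*,S^{\mathbf T,v},\mathbf V^{s,z,y}_{\sigma^*},y)$ to be a well-defined random variable, so that proposition is not dispensable.
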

\begin{proof}
Let $(\pi,\sigma)\in \mathcal{A}(\mathbf T,s,z,y)$.
Set $\sigma_1=\sigma\wedge\min\{t:\gamma_t\neq \gamma_{t+}\}$ where $\pi=(z,y,\gamma)$.
Clearly $\sigma_1\in\mathcal{T}_{\mathbf T}$ is a stopping time.
Introduce the stochastic process
$$U_t=ess \sup_{\tau\geq t, \tau\in\mathcal{T}_{\mathbf T}}\mathbb{E}_{\mathbb P}\left(\left(\mathbb{I}_{\tau\leq\sigma}Y^{\mathbf T,v}_{\tau}
+\mathbb{I}_{\sigma<\tau}X^{\mathbf T,v}_{\sigma}-V^{\pi}_{\sigma\wedge \tau}
\right)^{+}|\mathcal F_t\right), \ t\in [0,\mathbf T].$$
The stochastic process
$\{\mathbb{I}_{t\leq\sigma}Y^{\mathbf T,v}_{t}+\mathbb{I}_{\sigma<t}X^{\mathbf T,v}_{\sigma}-V^{\pi}_{\sigma\wedge t}\}_{t=0}^\mathbf T$
is left continuous with right hand limits. Furthermore, this process is lower semi-continuous from the right.
Thus
from the general theory of optimal stopping (see \cite{KQ} and the references there) it follows that
$\{U_t\}_{t=0}^{\mathbf T}$ is a $c\grave{a}dl\grave{a}g$ process and
for any stopping time
$\rho\leq\mathbf T$,
\begin{equation}\label{4.1++}
U_{\rho}=ess \sup_{\tau\in\mathcal{T}^{\rho}_{\mathbf T}}\mathbb{E}_{\mathbb P}\left(\left(\mathbb{I}_{\tau\leq\sigma}
Y^{\mathbf T,v}_{\tau}+\mathbb{I}_{\sigma<\tau}X^{\mathbf T,v}_{\sigma}-V^{\pi}_{\sigma\wedge \tau}
\right)^{+}|\mathcal F_{\rho}\right)
\end{equation}
where $\mathcal{T}^{\rho}_{\mathbf T}$ is the set of all stopping times $\rho\leq\tau\leq\mathbf T$ which satisfy
$\tau>\rho \mathbb{I}_{\rho<\mathbf T}$.

Clearly, there is no trade
until the time $\sigma_1$. Thus
$V^{\pi}_t=\mathbf V^{s,z,y}_t$ for $t\leq\sigma_1$. This together with the fact that $\sigma_1\leq\sigma$ and (\ref{4.1++}) (for $\rho=\sigma_1$) yields
\begin{equation}\label{4.1+++}
\cR(\mathbf T,v,\pi,\sigma)=\sup_{\tau\in\mathcal{T}_{\mathbf T}}\mathbb{E}_{\mathbb P}\left
(\mathbb{I}_{\sigma_1<\tau}U_{\sigma_1}+\mathbb{I}_{\tau\leq \sigma_1}
(Y^{\mathbf T,v}_{\tau}-\mathbf V^{s,z,y}_{\tau})^{+}\right).
\end{equation}
From the Markov property of the Brownian motion it follows
that
$$U_t\geq R(\mathbf T-t,S^{\mathbf T,v},V^\pi_t,\gamma_t).$$
Thus, by the continuity of $R$ (see Proposition \ref{essential})
it follows that on the event $\sigma_1<\mathbf T$,
\begin{eqnarray}\label{4.1++++}
&U_{\sigma_1}=\mathbb{I}_{\sigma_1=\sigma}\left(X^{\mathbf T,v}_{\sigma_1}-V^{\pi}_{\sigma_1}\right)^{+}
+\mathbb{I}_{\sigma_1<\sigma}\lim_{t\downarrow\sigma_1}U_t\geq\\
&\min\left(\left(X^{\mathbf T,v}_{\sigma_1}-V^{\pi}_{\sigma_1}\right)^{+}, R(\mathbf T-\sigma_1,S^{\mathbf T,v},V^\pi_{\sigma_1+},\gamma_{\sigma_1+})\right)\geq\nonumber\\
&\hat R(\mathbf T-\sigma_1,S^{\mathbf T,v},V^\pi_{\sigma_1},\gamma_{\sigma_1})=\hat R(\mathbf T-\sigma_1,S^{\mathbf T,v},\mathbf V^{s,z,y}_{\sigma_1},y).\nonumber
\end{eqnarray}
From (\ref{4.1+++})--(\ref{4.1++++}) and the inequality $\sigma_1\leq\Theta$ we get
\begin{eqnarray}\label{dynamical}
&\cR(\mathbf T,v,\pi,\sigma)\geq
\inf_{\sigma\in\mathcal{T}_{\mathbf T},\sigma\leq\Theta} \sup_{\tau\in\mathcal{T}_{\mathbf T},\tau\leq\Theta}
\mathbb{E}_{\mathbb P}\bigg(\mathbb{I}_{\tau\leq \sigma}\times\\
&(Y^{\mathbf T,v}_{\tau}-\mathbf V^{s,z,y}_{\tau})^{+}+
\mathbb{I}_{\sigma<\tau}\hat{R}(\mathbf T-\sigma,S^{\mathbf T,v},\mathbf V^{s,z,y}_{\sigma},y)\bigg)\nonumber
\end{eqnarray}
and since $(\pi,\sigma)$ was arbitrary the proof is completed.
\end{proof}
Next, we construct an optimal hedge and verify that it is indeed optimal. The verification will be done
by showing that for the constructed hedge the left hand side
and the right hand side of (\ref{dynamical})
are equal.

Let $(z,y)$ be an initial position.
Set,
$\hat\sigma_0=0$, $\hat\gamma_0=y$, $\hat Z_0=z$.
For $k\geq 1$ define the random time
$\Theta_k=T\wedge\inf\{t\geq\hat\sigma_{k-1}: \mathbf V^{S^{(s)}_{\hat\sigma_{k-1}},\hat Z_{k-1},\hat\gamma_{k-1}}_t<0\}$
and the stochastic process
$\{\mathbf R^{(k)}_t\}_{t=\hat\sigma_{k-1}}^{\Theta_k}$,
\begin{eqnarray*}
&\mathbf R^{(k)}_t=ess\inf_{\sigma\in\mathcal{T}_{T},\hat\sigma_{k-1}\leq\sigma\leq\Theta_k}
ess\sup_{\tau\in\mathcal{T}_{T},\hat\sigma_{k-1}\leq\tau\leq\Theta_k}\\
&\mathbb{E}_{\mathbb P}\left(\mathbb{I}_{\tau\leq\sigma}
\left(Y^{(s)}_{\tau}-\mathbf V^{S^{(s)}_{\hat\sigma_{k-1}},\hat Z_{k-1},\hat\gamma_{k-1}}_{\tau}\right)^{+}+\right.\\
&\left.\mathbb{I}_{\sigma<\tau}\hat{R}\left(T-\sigma,S^{(s)},
\mathbf V^{S^{(s)}_{\hat\sigma_{k-1}},\hat Z_{k-1},\hat\gamma_{k-1}}_{\sigma},\hat\gamma_{k-1}\right)\bigg|
\mathcal{F}_t\right).
 \end{eqnarray*}
Next, introduce the random time
$$\hat\sigma_k=\Theta_k\wedge\inf\left\{t\geq\hat\sigma_{k-1}:\mathbf R^{(k)}_t=
\hat{R}\left(T-t,S^{(s)},\mathbf V^{S^{(s)}_{\hat\sigma_{k-1}},\hat Z_{k-1},\hat\gamma_{k-1}}_t,\hat\gamma_{k-1}\right)\right\}$$
and the random variable
$$\hat\beta_k=\beta^{*}\left(T-\hat\sigma_k,S^{(s)},\mathbf V^{S^{(s)}_{\hat\sigma_{k-1}},\hat Z_{k-1},\hat\gamma_{k-1}}_{\hat\sigma_k},\hat\gamma_{k-1}\right)
$$
where the function $\beta^{*}$ was introduced in Proposition \ref{essential}.
Finally, set
$$\hat\gamma_k=\hat\gamma_{k-1}+\hat\beta_k \ \ \mbox{and} \ \
\hat Z_k=h\left(S^{(s)}_{\hat\sigma_k},\mathbf V^{S^{(s)}_{\hat\sigma_{k-1}},\hat Z_{k-1},\hat\gamma_{k-1}}_{\tau},\hat\gamma_{k-1},\hat\beta_k\right),$$
with
the function $h$
given by (\ref{function}).

The following lemma completes the proof of Theorem \ref{thm2.1} and gives a characterisation of the optimal hedge.
\begin{lem}\label{lem4.2}
For any $k\geq 1$ the stochastic process
$\{\mathbf R^{(k)}_t\}_{t=\hat\sigma_{k-1}}^{\Theta_k}$
is well defined, continuous and $\hat\sigma_k,\Theta_k$ are stopping times. Furthermore, the random variables
$\hat\gamma_k,\hat Z_k$ are $\mathcal{F}_{\hat\sigma_k}$ measurable.
Next, the structure of an optimal hedge is described in the following way.
Set,
\begin{eqnarray*}
&\hat N=\min\{k:\hat\sigma_k=T\}\wedge\\
&\min\left\{k:\hat
R\left(T-\hat\sigma_k,S^{(s)},\mathbf V^{S^{(s)}_{\hat\sigma_{k-1}},\hat Z_{k-1},\hat\gamma_{k-1}}_{\hat\sigma_k},\hat\gamma_{k-1}\right)=\left(X^{(s)}_{\hat\sigma_k}-
\mathbf V^{S^{(s)}_{\hat\sigma_{k-1}},\hat Z_{k-1},\hat\gamma_{k-1}}_{\hat\sigma_k}
\right)^{+}\right\}.
\end{eqnarray*}
Then $\hat N<\infty$ a.s and the hedge $(\hat\pi,\hat\sigma)\in\mathcal{A}\left(T,s,z,y\right)$ which is given by
$\hat\pi=(z,y,\hat\gamma)$, where
$$\hat\gamma_t=y\mathbb{I}_{t=0}+\sum_{i=1}^{\hat N-1} \hat\gamma_{i}\mathbb{I}_{(\hat\sigma_{i},\hat\sigma_{i+1}]} \ \ \mbox{and} \ \ \hat\sigma=\hat\sigma_{\hat N},$$
satisfies
$R(T,s,z,y)=\cR(T,s,\hat\pi,\hat\sigma).$
\end{lem}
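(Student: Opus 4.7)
The lemma has two halves: first, a regularity/well-definedness claim about the process $\mathbf R^{(k)}$ and the times $\hat\sigma_k, \Theta_k$; second, the verification that the constructed hedge attains the infimum defining $R(T,s,z,y)$. I would attack these in that order.

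For the regularity, I invoke Proposition \ref{essential} which (among other things) gives that $\hat R$ is continuous in all of its arguments and that the selector $\beta^{*}$ is measurable. Combined with continuity of the sample paths of $S^{(s)}$ and $\mathbf V^{\cdot,\cdot,\cdot}$, this makes the integrands of the Dynkin game defining $\mathbf R^{(k)}$ continuous and progressively measurable on $[\hat\sigma_{k-1},\Theta_k]$. Classical Dynkin game theory with continuous payoffs in a Brownian filtration (cf.\ \cite{KQ}) then produces a continuous modification of $\mathbf R^{(k)}$ obeying the dynamic programming principle, and identifies the first hitting time of the value onto the upper obstacle $\hat R$ as an optimal minimizer stopping time; this is precisely $\hat\sigma_k$. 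Hence $\Theta_k,\hat\sigma_k$ are stopping times, and $\hat\gamma_k, \hat Z_k$ are $\mathcal F_{\hat\sigma_k}$-measurable by induction using measurability of $\beta^{*}$ and the explicit formulas.

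To show $\hat N<\infty$ a.s.\ I use the supermartingale property of $V^{\hat\pi}$ under $\mathbb Q$. On $\{k<\hat N\}$ one has $\hat R<(X^{(s)}_{\hat\sigma_k}-\mathbf V^{\cdots}_{\hat\sigma_k})^{+}$, so by the definition of $\hat R$ the optimizer $\hat\beta_k$ must be a genuinely permissible trade that debits at least $\delta$ from the cash component (the degenerate case $\hat\beta_k=0$ cannot occur on $\{k<\hat N\}$, for otherwise $\hat R$ would already coincide with the cancellation value, terminating the index set). Coupled with $V^{\hat\pi}\geq 0$ and $\mathbb E_{\mathbb Q}V^{\hat\pi}_T\leq z+g(y,s)$, this gives an almost sure bound on the number of trades and hence $\hat N<\infty$ a.s.

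The verification proceeds by backward induction: I would show that on $\{\hat N\geq k\}$,
\begin{equation*}
\mathbf R^{(k)}_{\hat\sigma_{k-1}}=\operatorname*{ess\,sup}_{\tau\in\cT_T}\mathbb E_{\mathbb P}\!\left(\text{(remaining realized shortfall of $(\hat\pi,\hat\sigma)$)}\,\big|\,\mathcal F_{\hat\sigma_{k-1}}\right).
\end{equation*}
At $k=\hat N$ this holds by the very definition of $\hat N$. For $k<\hat N$ it propagates via the saddle-point property of $\hat\sigma_k$ in the Dynkin game defining $\mathbf R^{(k)}$ together with the strong Markov property at $\hat\sigma_k$: the choice $\hat\beta_k=\beta^{*}$ makes $\mathbf R^{(k)}_{\hat\sigma_k}$ equal the infimum in the definition of $\hat R$, which by Proposition \ref{essential} agrees with $R(T-\hat\sigma_k,S^{(s)},\hat Z_k,\hat\gamma_k)$; this is in turn the conditional shortfall of the tail strategy by the induction hypothesis. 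Unfolding at $k=1$ yields $\cR(T,s,\hat\pi,\hat\sigma)=\mathbf R^{(1)}_0$, which by Lemma \ref{lem4.1} is $\leq R(T,s,z,y)$; the reverse inequality is trivial, giving equality.

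\textbf{Main obstacle.} The delicate point is to close the recursion in the $k\to\hat N$ limit, since $\hat N$ is random and not bounded a priori. The payoffs $Y^{(s)},X^{(s)}$ satisfy only polynomial growth \eqref{2.3}, while $V^{\hat\pi}$ is most naturally controlled under $\mathbb Q$ and the shortfall is computed under $\mathbb P$; translating between the two via the density $Z_t$ and obtaining uniform integrability of the partial shortfalls before passing to the limit is the technically delicate step. This is handled using \eqref{2.3}, finite moments of $\sup_{t\leq T}S^{(s)}_t$ under both measures, and $\hat N<\infty$ a.s.
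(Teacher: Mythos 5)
Your overall architecture matches the paper's: regularity of $\mathbf R^{(k)}$ and measurability of $\hat\sigma_k,\hat\gamma_k,\hat Z_k$ via Proposition \ref{essential} and Dynkin game theory with continuous payoffs (the paper uses Proposition 3.9 of \cite{IK} and derives the conditional version (\ref{4.100}) of Lemma \ref{lem4.1} from \cite{LM}); finiteness of $\hat N$ from the fact that every rebalancing costs at least $\delta$ while $V^{\hat\pi}$ is a nonnegative $\mathbb Q$--supermartingale (note: this yields $\mathbb Q(\hat N>n)=O(1/n)$ and hence $\hat N<\infty$ a.s., \emph{not} an ``almost sure bound on the number of trades''---the number of trades is a.s.\ finite but unbounded); and a backward induction for the verification.

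The genuine gap is exactly the point you flag as the ``main obstacle'' and then do not resolve. A backward induction cannot be anchored ``at $k=\hat N$'': $\hat N$ is a random index with no deterministic upper bound, so there is no last step from which to recurse, and the claimed exact identity $\mathbf R^{(k)}_{\hat\sigma_{k-1}}=\mathrm{ess\,sup}_\tau\,\mathbb E_{\mathbb P}(\cdot\,|\,\mathcal F_{\hat\sigma_{k-1}})$ at each stage is both stronger than needed and not what survives the limiting procedure. The paper's device is a truncation: fix $\epsilon>0$, choose a deterministic $N$ with $\mathbb P(\hat N>N)<\epsilon$, introduce the auxiliary martingale $M_t=\mathbb E_{\mathbb P}\bigl(\sup_{u\le T}(X^{(s)}_u)^2\,\mathbb I_{\hat N>N}\,\big|\,\mathcal F_t\bigr)$, and prove by backward induction over the \emph{deterministic} range $k=N,N-1,\dots,0$ the one--sided inequality $\hat U_{\hat\sigma_{k\wedge\hat N}}\le \hat R(\cdots)+\sqrt{M_{\hat\sigma_{k\wedge\hat N}}}$, where the base case $k=N$ uses Jensen's inequality to absorb the tail event $\{N<\hat N\}$ into $\sqrt{M}$, and the inductive step uses (\ref{4.100}) together with the fact that $\sqrt{M}$ is a supermartingale. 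At $k=0$ one gets $\cR(T,s,\hat\pi,\hat\sigma)\le R(T,s,z,y)+\sqrt{M_0}$ with $M_0=O(\epsilon^{1/2})$ by (\ref{2.3}) and Cauchy--Schwarz, and letting $\epsilon\downarrow 0$ finishes. Merely citing ``(\ref{2.3}), finite moments of $\sup_t S^{(s)}_t$, and $\hat N<\infty$ a.s.'' does not produce this mechanism; without the truncation and the supermartingale error term that propagates through the induction, the recursion does not close.
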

\begin{proof}
First, we establish a stronger version of the
inequality
(\ref{dynamical}). The exact form is the following. For any stopping
time $\theta\in\mathcal{T}_{T}$ and random variables $Z\geq 0$, $Y$ which are $\mathcal{F}_{\theta}$
measurable,
\begin{eqnarray}\label{4.100}
&R(T-\theta,S^{(s)},Z,Y)\geq
\inf_{\sigma\in\mathcal{T}_{T},\theta\leq\sigma\leq\Theta} \sup_{\tau\in\mathcal{T}_{T},\theta\leq\tau\leq\Theta}\\
&\mathbb{E}_{\mathbb P}\left(\mathbb{I}_{\tau\leq \sigma}\left(Y^{(s)}_{\tau}-\mathbf V^{S^{(s)}_{\theta},Z,Y}_{\tau}\right)^{+}+
\mathbb{I}_{\sigma<\tau}
\hat{R}\left(T-\sigma,S^{(s)},\mathbf V^{S^{(s)}_{\theta},Z,Y}_{\sigma},Y\right)\right),\nonumber
\end{eqnarray}
where
$\Theta=T\wedge\inf\{t\geq\theta:\mathbf V^{S^{(s)}_{\theta},Z,Y}_{t}< 0\}.$
In order to derive (\ref{4.100}) introduce the stochastic processes
\begin{eqnarray*}
&\mathbf X_t:=\hat{R}\left(T-t,S^{(s)},\mathbf V^{S^{(s)}_{\theta},Z,Y}_{t},Y\right), \ \ t\in[\theta,\Theta]\\
&\mathbf Y_t:=\left(Y^{(s)}_{t}-\mathbf V^{S^{(s)}_{\theta},Z,Y}_{t}\right)^{+}, \ \ t\in[\theta,\Theta].
\end{eqnarray*}
From Proposition \ref{essential}
it follows that the above processes are continuous.
Observe that (since the buyer can stop at $0$)
for any $(\hat{\mathbf T},\hat v,\hat z,\hat y)\in [0,T]\times C_{++}[0,T]\times\mathbb R_{+}\times\mathbb R$,
$\hat R(\hat {\mathbf T},\hat v,\hat z,\hat y)\geq (Y^{\hat {\mathbf T},\hat v}_0-\hat z)^{+}.$
We conclude that $\mathbf X\geq \mathbf Y$,
and so by applying standard results on Dynkin games (see \cite{LM})
we derive (\ref{4.100}) from (\ref{dynamical}).

Next for
a given $k$, consider the stochastic processes
\begin{eqnarray*}
&\hat{\mathbf X}_t:=\hat{R}\left(T-t,S^{(s)},\mathbf V^{S^{(s)}_{\hat\sigma_{k-1}},\hat Z_{k-1},\hat\gamma_{k-1}}_{t},
,\hat\gamma_{k-1}\right), \ \ t\in [\hat\sigma_{k-1},\Theta_k],\\
&\hat{\mathbf Y}_t=\left(Y^{(s)}_{t}-\mathbf V^{S^{(s)}_{\hat\sigma_{k-1}},\hat Z_{k-1},\hat\gamma_{k-1}}_{t}\right)^{+}, \ \ t\in [\hat\sigma_{k-1},\Theta_k]
\end{eqnarray*}
where $\hat Z_k,\hat\gamma_k,\hat\sigma_k,\Theta_k$, $k\in\mathbb N$ were define above.
Observe that $\hat{\mathbf X}\geq \hat {\mathbf Y}$. Thus,
(induction on $k$) by applying Proposition 3.9 from \cite{IK} (see also Theorem 4.1 in \cite{CK})
we conclude that $\{\mathbf R^{(k)}_t\}_{t=\hat\sigma_{k-1}}^{\Theta_k}$ is a continuous stochastic process
and $\hat\sigma_k$ is a stopping time, and
$\hat\gamma_k,\hat Z_k$ are $\mathcal{F}_{\hat\sigma_k}$ measurable.
From the definition of the function $\hat R$ it follows that $\hat Z_k\geq 0$ for $k\leq\hat N$.
This together with the definition of the stopping times $\Theta_k$, $k\in\mathbb N$
and the fact that the portfolio value is constant after $\hat\sigma_{\hat N}$ yields that
$\hat\pi$ is an \textit{admissible} portfolio.
Observe, that for any $n\in\mathbb N$ we have
$$0\leq V^{\hat\pi}_{\hat\sigma_n}\leq z+\max(\delta,\mu |y| s)+\sum_{i=1}^n
\gamma^{\pi}_{\hat\sigma_{i-1}}(S^{(s)}_{\hat\sigma_i}-S^{(s)}_{\hat\sigma_{i-1}})-n\delta\mathbb{I}_{\hat N>n}.$$
Taking the expectation with respect to the martingale measure $\mathbb Q$
we obtain that
$\mathbb Q(\hat N>n)\leq\frac{ z+\max(\delta,\mu |y| s)}{n},$
and so $\hat N<\infty$ a.s.
Thus $(\hat\pi,\hat\sigma)\in\mathcal{A}(T,s,z,y)$.

Finally, we prove that $(\hat\pi,\hat\sigma)$ is an optimal hedge.
Choose $\epsilon>0$. There exists $N\in \mathbb N$ such that
\begin{equation}\label{4.5}
\mathbb P(\hat N>N)<\epsilon.
\end{equation}
Define the continuous martingale
$M_t=\mathbb E_{\mathbb P}\left(\sup_{0\leq t\leq  T}\left(X^{(s)}_t\right)^2\mathbb {I}_{\hat N>N}|\mathcal F_t\right)$, $t\in [0,T]$,
and the stochastic process
$$\hat U_t=ess \sup_{\tau\geq t, \tau\in\mathcal{T}_{T}}\mathbb{E}_{\mathbb P}\left(\left(\mathbb{I}_{\tau\leq\hat\sigma}Y^{(s)}_{\tau}
+\mathbb{I}_{\hat\sigma<\tau}X^{(s)}_{\hat\sigma}-V^{\hat\pi}_{\hat\sigma\wedge \tau}
\right)^{+}|\mathcal F_t\right), \ t\in [0,T].$$

Let us show that for any $0\leq k\leq N$,
\begin{equation}\label{4.6}
\hat U_{\hat\sigma_{k\wedge\hat N}}\leq \hat{R}\left(T-\hat\sigma_{k\wedge\hat N},S^{(s)},V^{\hat\pi}_{\hat\sigma_{k\wedge\hat N}},\hat\gamma_{k\wedge\hat N-1}\right)+
\sqrt{M_{\hat\sigma_{k\wedge\hat N}}},
\end{equation}
where for $k=0$ we set
$\hat{R}(T,S^{(s)}, z,\hat\gamma_{-1}):=R(T,s,z,y)$.

We start with $k=N$. From the definition
of $\hat N$ and the Jensen inequality
it follows that
\begin{eqnarray*}
&\hat U_{\hat\sigma_{N\wedge\hat N}}\leq \mathbb{I}_{\hat\sigma_{N\wedge\hat N}=\hat\sigma}
 \left(X^{(s)}_{\hat\sigma_{N\wedge\hat N}}-V^{\hat\pi}_{\hat\sigma_{N\wedge\hat N}}\right)^{+}+
\mathbb{I}_{N<\hat N} \mathbb E_{\mathbb P}\left(\sup_{0\leq t\leq \mathbb T}\left(X^{(s)}_t\right)|\mathcal F_{\hat\sigma_{N\wedge\hat N}}\right)
\\
&\leq\hat{R}\left(T-\hat\sigma_{N\wedge\hat N},S^{(s)},V^{\hat\pi}_{\hat\sigma_{N\wedge\hat N}},\hat\gamma_{N\wedge\hat N-1}\right)+
\sqrt{M_{\hat\sigma_{N\wedge\hat N}}}.
\end{eqnarray*}
Next, we prove that if (\ref{4.6}) is true for $k+1$ then it is true for $k$.
From the definition of $\hat N$, on the event $k\geq\hat N$ (which is $\mathcal{F}_{\hat\sigma_{k\wedge\hat N}}$ measurable),
(\ref{4.6}) trivially holds true. Consider the event
$k<\hat N$. On this event, similarly
to (\ref{4.1+++}) we have
\begin{eqnarray*}
&\hat U_{\hat\sigma_{k}}=\sup_{\tau\in\mathcal{T}_{T}}\mathbb{E}_{\mathbb P}\left(\mathbb{I}_{\hat\sigma_{k+1}<\tau}
\hat U_{\hat\sigma_{k+1}}+
\mathbb{I}_{\tau\leq \hat\sigma_{k+1}}\left(Y^{(s)}_{\tau}-\mathbf{V}^{S^{(s)}_{\hat\sigma_k},\hat Z_k,\hat\gamma_k}_{\tau}
\right)^{+}|\mathcal{F}_{\hat\sigma_{k\wedge \hat N}}\right)\leq
\end{eqnarray*}
from the induction assumption, since $\sqrt M$ is a super--martingale and by
the definition of $\hat\sigma_{k+1}$,
\begin{eqnarray*}
&\leq\sup_{\tau\in\mathcal{T}_{T}}\mathbb{E}_{\mathbb P}\left(\mathbb{I}_{\tau\leq \hat\sigma_{k+1}}\left(Y^{(s)}_{\tau}-
\mathbf{V}^{S^{(s)}_{\hat\sigma_k},\hat Z_k,\hat\gamma_k}_{\tau}\right)^{+}+
\mathbb{I}_{\hat\sigma_{k+1}<\tau}\times\right.\\
&\left.\hat{R}\left(T-\hat\sigma_{k+1},S^{(s)},V^{\hat\pi}_{\hat\sigma_{k+1}},\hat\gamma_{k}\right)
\bigg|\mathcal{F}_{\hat\sigma_{k\wedge\hat N}}\right)+\sqrt{M_{\hat\sigma_{k\wedge\hat N}}}=\\
&\sqrt{M_{\hat\sigma_{k\wedge\hat N}}}+ess\inf_{\sigma\in\mathcal{T}_{ T},\hat\sigma_{k}\leq\sigma\leq\Theta_{k+1}}
ess\sup_{\tau\in\mathcal{T}_{ T},\hat\sigma_{k}\leq\tau\leq\Theta_{k+1}}\\
&\mathbb{E}_{\mathbb P}\left(\mathbb{I}_{\tau\leq\sigma}\left(Y^{(s)}_{\tau}-
\mathbf{V}^{S^{(s)}_{\hat\sigma_k},\hat Z_k,\hat\gamma_k}_{\tau}\right)^{+}+
\mathbb{I}_{\sigma<\tau}\hat{R}\left(T-\sigma,S^{(s)},\mathbf{V}^{S^{(s)}_{\hat\sigma_k},\hat Z_k,\hat\gamma_k}_{\sigma},\hat\gamma_k\right)\bigg|
\mathcal{F}_{\hat{\sigma}_{k\wedge\hat N}}\right)\leq
\end{eqnarray*}
by (\ref{4.100}),
$$\leq \sqrt{M_{\hat\sigma_{k\wedge\hat N}}}+R(T-\hat\sigma_k,S^{(s)},V^{\hat\pi}_{\hat\sigma_k},\hat\gamma_k)
= \sqrt{M_{\hat\sigma_{k\wedge\hat N}}}+\hat{R}\left(T-\hat\sigma_{k\wedge\hat N},S^{(s)},V^{\hat\pi}_{\hat\sigma_k},\hat\gamma_{k\wedge\hat N-1}\right)
$$
where the last equality
follows from the definition of $\hat\gamma_k$ and the fact that we are on the event $k<\hat N$.
This completes the proof of (\ref{4.6}).

From (\ref{2.3}), the Cauchy--Schwarz inequality and (\ref{4.5}) it follows that $M_0=O(\epsilon^{1/2}).$
Thus
by applying (\ref{4.6}) for $k=0$,
$$\cR(T,s,\hat\pi,\hat\sigma)=\hat U_0\leq \sqrt{M_0}+R(T,s,z,y)=O(\epsilon^{1/4})+R(T,s,z,y),$$
and by taking $\epsilon\downarrow 0$ we complete the proof.
\end{proof}

\section{Proof of Theorem \ref{thm2.2}}
\label{sec2+}
First, we follow Lemma 4.1 in \cite{D} and obtain a bound on the growth of \textit{admissible} portfolios.
\begin{lem}\label{lem.growth}
Let $(z,y)\in \mathbb{R}_{++}\times\mathbb{R}$ be an initial position. There exists a constant $\tilde C$ such that for any
$\pi\in\mathcal{A}(T,s,z,y)$,
$$\mathbb E_{\mathbb Q} \left(\max_{0\leq t\leq T}|\gamma_t| S^{(s)}_t+\int_{0}^T S^{(s)}_u |d\gamma_u|\right)^2\leq \tilde C(1+z^2+y^2).$$
\end{lem}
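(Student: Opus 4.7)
The plan is to adapt the argument of Lemma~4.1 in \cite{D}, exploiting that the minimal--cost structure $g(\beta,S)=\max(\delta,\mu|\beta|S)\mathbb{I}_{\beta\ne 0}$ dominates the proportional cost $\mu|\beta|S$. This will let me close a Gronwall loop between the quantity to be bounded and a stochastic integral driven by $S^{(s)}$.

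First I would combine the portfolio value identity \eqref{2.4} with admissibility $V^{\pi}_t\ge 0$ and the pointwise bound $g(\beta,S)\ge\mu|\beta|S$; moving all transaction--cost terms to the left gives
$$\mu|\gamma_t|S^{(s)}_t+\mu\int_0^t S^{(s)}_u|d\gamma_u|\le z+g(y,s)+M_t,\qquad M_t:=\int_0^t\gamma_u\,dS^{(s)}_u,$$
where $M$ is a $\mathbb Q$--local martingale because $S^{(s)}$ is a $\mathbb Q$--martingale and $\gamma$ is left continuous with finitely many jumps. Setting $A_t:=|\gamma_t|S^{(s)}_t+\int_0^t S^{(s)}_u|d\gamma_u|$ and $A^*_t:=\sup_{u\le t}A_u$, this rearranges to $\mu A^*_t\le z+g(y,s)+\sup_{u\le t}|M_u|$, so the target quantity is pointwise controlled by the running supremum of $M$.

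Because $M$ is only a local martingale, the next step is to localize by the stopping times $\tau_n:=\inf\{t:A^*_t\ge n\}\wedge T$, on which $M^{\tau_n}$ is bounded and hence a square--integrable $\mathbb Q$--martingale with $d\langle M\rangle_u=\kappa^2\gamma_u^2(S^{(s)}_u)^2\,du$. Squaring via $(a+b)^2\le 2a^2+2b^2$, taking $\mathbb Q$--expectations at $t\wedge\tau_n$ and applying Doob's $L^2$ inequality yield
$$\mu^2\,\mathbb{E}_{\mathbb Q}\!\left[(A^*_{t\wedge\tau_n})^2\right]\le 2(z+g(y,s))^2+8\kappa^2\,\mathbb{E}_{\mathbb Q}\!\left[\int_0^{t\wedge\tau_n}\gamma_u^2(S^{(s)}_u)^2\,du\right].$$
The monotonicity $\gamma_u^2(S^{(s)}_u)^2\le A_u^2\le(A^*_u)^2$ combined with Fubini then produces, for $\phi_n(t):=\mathbb{E}_{\mathbb Q}[(A^*_{t\wedge\tau_n})^2]$, the closed linear inequality $\phi_n(t)\le C_1+C_2\int_0^t\phi_n(u)\,du$ with $C_1=2(z+g(y,s))^2/\mu^2$ and $C_2=8\kappa^2/\mu^2$ independent of $n$.

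Gronwall's lemma then gives $\phi_n(T)\le C_1 e^{C_2 T}$ uniformly in $n$. Since $g(y,s)\le\delta+\mu|y|s$, the prefactor $C_1$ is dominated by a multiple of $1+z^2+y^2$. Because any admissible strategy has only finitely many trades, $A^*_T$ is a.s.\ finite and $\tau_n\uparrow\infty$ a.s., so Fatou's lemma lifts the bound to $\mathbb{E}_{\mathbb Q}[(A^*_T)^2]\le\tilde C(1+z^2+y^2)$, which dominates the squared expression in the statement. The main obstacle I expect is precisely this feedback loop --- $A^*$ controls $\langle M\rangle$, Doob then controls $\sup|M|$ in terms of $A^*$ again --- and the combination of Gronwall with localization is what closes the circle without forcing $T$ to be small; the minimal--cost constant $\delta$ enters only additively through $g(y,s)$ and so is harmless.
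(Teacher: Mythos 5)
Your argument is correct and is essentially the paper's: both reduce the lemma to controlling $\sup_{0\le t\le T}\left|\int_0^t\gamma_u\,dS^{(s)}_u\right|$ in $L^2(\mathbb Q)$ and then close the feedback loop with a Doob--Gronwall--localization estimate (which the paper outsources to the arguments of Lemma 4.1 of \cite{D} rather than writing out). Your opening step --- lower-bounding each transaction cost by its proportional part to control $|\gamma_t|S^{(s)}_t+\int_0^tS^{(s)}_u|d\gamma_u|$ in one stroke --- is just a cleaner packaging of the paper's integration by parts combined with the decomposition $\gamma=\gamma^{+}-\gamma^{-}$, so the routes coincide in substance.
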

\begin{proof}
Let $\pi=(z,y,\gamma)\in \mathcal{A}(T,s,z,y)$.
We will use the integration by parts formula
\begin{equation}\label{parts}
\int_{0}^T \gamma_u dS^{(s)}_u=\gamma_tS^{(s)}_t-ys-\int_{[0,t]}S^{(s)}_u d\gamma_u,
\end{equation}
and the decomposition
$\gamma_t=\gamma^{+}_t-\gamma^{-}_t$
into a positive variation $\gamma^{+}$
and a negative variation $\gamma^{-}$.
From (\ref{2.4}) and (\ref{parts}) it follows that for any $t\in [0,T]$
$$0\leq V^\pi_t\leq z+g(y,s)+|y|s-(1+\mu)\int_{[0,t]}S^{(s)}_ud\gamma_u-2\mu\int_{[0,t]} S^{(s)}_ud\gamma^{-}_u+(1+\mu)\gamma_t S^{(s)}_t$$
and
$$0\leq V^\pi_t\leq z+g(y,s)+|y|s-(1-\mu)\int_{[0,t]}S^{(s)}_ud\gamma_u-2\mu \int_{[0,t]} S^{(s)}_ud\gamma^{+}_u+(1-\mu)\gamma_t S^{(s)}_t.$$
This together with (\ref{parts}) yields
\begin{eqnarray}\label{cal0}
&\int_{0}^t S^{(s)}_ud\gamma^{-}_u\leq \frac{1}{2\mu}\left(z+g(y,s)+|y|s+(1+\mu)(|y| s+\int_{0}^t \gamma_udS^{(s)}_u)\right) \\
&\mbox{and} \ \ \int_{0}^t S^{(s)}_ud\gamma^{+}u\leq \frac{1}{2\mu}\left(z+g(y,s)+|y|s+(1-\mu)(|y|s+\int_{0}^t \gamma_udS^{(s)}_u)\right).\nonumber
\end{eqnarray}
Consequently,
\begin{eqnarray}\label{cal1}
&\gamma_tS^{(s)}_t
\geq\int_{0}^t\gamma_u dS^{(s)}_u-\int_{[0,t]} S^{(s)}_u d\gamma^{-}_u-|y|s\geq \\
&-\frac{1}{2\mu}(z+g(y,s)+(2+3 \mu)|y| s)-\frac{1-\mu}{2\mu} \int_{0}^t\gamma_uS^{(s)}_u  \nonumber
\end{eqnarray}
and
\begin{eqnarray}\label{cal2}
&\gamma_tS^{(s)}_t\leq
|y|s+\int_{0}^t\gamma_u dS^{(s)}_u+\int_{[0,t]} S^{(s)}_ud\gamma^{+}_u\leq \\
&\frac{1}{2\mu}(z+g(y,s)+(2+ \mu)|y| s)-\frac{1+\mu}{2\mu} \int_{0}^t\gamma_uS^{(s)}_u.\nonumber
\end{eqnarray}
From (\ref{cal1})--(\ref{cal2}) and the inequality $(a+b)^2\leq 2(a^2+b^2)$ we obtain
\begin{equation}\label{cal3}
|\gamma_tS^{(s)}_t|^2\leq \frac{\left(z+g(y,s)+(2+3 \mu)|y| s\right)^2}{2\mu^2}+\frac{(1+\mu)^2}{2\mu^2}
\left(\int_{0}^t\gamma(u)dS(u)\right)^2.
\end{equation}
Following the arguments of Lemma 4.1 in \cite{D} we see that there is a constant $\tilde c$ such that
$$\mathbb E_{\mathbb Q} \left(\sup_{0\leq t\leq T}\left(\int_{0}^t\gamma_udS^{(s)}_u\right)^2\right)\leq \tilde c (z+g(y,s)+(2+3 \mu)|y| s)^2.$$
This together with (\ref{cal0}) and (\ref{cal3}) competes the proof.
\end{proof}

Next, for any $n\in\mathbb N$ introduce the piecewise constant stochastic process
$\hat S^{n,s}_t=S^{(s)}_{\theta^{(n)}_{[nt/T]}}$, $t\in [0,T]$
and the payoffs
$$\hat Y^{n,s}_t=[F(\hat S^{n,s})]([nt/T]T/n)\leq \hat X^{n,s}_t=[G(\hat S^{n,s})]([nt/T]T/n) \ \  t\in [0,T].$$
By applying the results of Section 4 in \cite{K} we get that there exists a constant $C_1$ such that for any $n$,
\begin{equation}\label{2+.101}
\mathbb E_{\mathbb P}\left(\max_{1\leq k\leq n}\max_{\theta^{(n)}_{k-1}\leq t\leq\theta^{(n)}_k}
\left(|X^{(s)}_{t\wedge T}-\hat X^{n,s}_{\frac{kT}{n}}|+|Y^{(s)}_{t\wedge T}-\hat Y^{n,s}_{\frac{kT}{n}}|\right)\right)\leq C_1 n^{-1/4} (\ln n)^{3/4}.
\end{equation}
Let $\hat{\mathcal T}_n$ the set of all stopping times
with respect to the filtration ${\{\mathcal F_{\theta^{(n)}_k}\}}_{k=0}^n$ with values
$k=0,1,...,n$.
Set,
\begin{equation*}
\hat R_n(T,s,z,y)=\inf_{\pi\in\mathcal{A}^{W,n}(s,z,y)}\inf_{\zeta\in \hat{\mathcal T}_n}
\sup_{\eta\in \hat{\mathcal T}_n}\mathbb{E}_{\mathbb P}\left(\hat X^{n,s}_{\frac{\zeta T}{n}}\mathbb{I}_{\zeta<\eta}+
\hat Y^{n,s}_{\frac{\eta T}{n}}\mathbb{I}_{\eta\leq\zeta}-V^\pi_{\theta^{(n)}_{\zeta\wedge\eta}}\right)^{+}.
\end{equation*}
Observe that $\hat S^{n,s}_t=\Pi_n (S^{n,s}_t)$,
$\hat Y^{n,s}_t=\Pi_n (Y^{n,s}_t)$ and
$\hat X^{n,s}_t=\Pi_n (X^{n,s}_t)$, $t\in [0,T]$. Thus by using similar arguments as in \cite{D} and \cite{DK2} (see Section 3 in these papers) we get
that
\begin{equation}\label{2+.new}
\hat R_n(T,s,z,y)=R_n(T,s,z,y).
\end{equation}
The equality (\ref{2+.new}) plays a key role in the proof of the following result.
\begin{lem}\label{lem2+.1}
Let $(z,y)\in \mathbb{R}_{++}\times\mathbb{R}$ be an initial position. There exists a constant $C_2>0$ such that for any
$n\in\mathbb N$
\begin{equation*}
R(T,s,z,y)\geq R_n(T,s,z,y)-C_2 n^{-1/4} (\ln n)^{3/4}.
\end{equation*}
\end{lem}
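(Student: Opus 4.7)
The plan is to exhibit, for any near-optimal hedge $(\pi,\sigma)\in\mathcal{A}(T,s,z,y)$ in the BS model, a discretized hedge $(\tilde\pi,\zeta)\in\mathcal{A}^{W,n}(s,z,y)\times\hat{\mathcal T}_n$ whose shortfall, in the sense defining $\hat R_n$, exceeds $\cR(T,s,\pi,\sigma)$ by at most $O(n^{-1/4}(\ln n)^{3/4})$. The identification $\hat R_n=R_n$ established in (\ref{2+.new}) then yields the desired inequality upon taking infima and sending the initial slack to zero.

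Concretely, I would fix $\varepsilon>0$ and pick $\pi=(z,y,\gamma)$ with $\cR(T,s,\pi,\sigma)\leq R(T,s,z,y)+\varepsilon$. Because the minimal-cost component $\delta>0$ forces $\gamma$ to have only finitely many jumps a.s., I would collapse every trade of $\gamma$ that falls in an interval $(\theta^{(n)}_{k-1},\theta^{(n)}_k]$ onto the right endpoint $\theta^{(n)}_k$: set $\tilde\gamma_t=\gamma_{\theta^{(n)}_{k-1}+}$ on $(\theta^{(n)}_{k-1},\theta^{(n)}_k]$ for $k\leq n$, $\tilde\gamma\equiv 0$ on $(\theta^{(n)}_n,\infty)$. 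The resulting $\tilde\pi=(z,y,\tilde\gamma)$ belongs to $\mathcal{A}^{W,n}(s,z,y)$ after subtracting a small admissibility buffer of size comparable to the final error. For the cancellation time I would take $\zeta=\min\{0\leq k\leq n:\theta^{(n)}_k\geq\sigma\}\in\hat{\mathcal T}_n$, and against any competing $\eta\in\hat{\mathcal T}_n$ compare with the BS side stopping time $\tau=T\wedge\theta^{(n)}_\eta\in\mathcal{T}_T$.

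The error then decomposes into three pieces: (i) payoff approximation $|X^{(s)}_\sigma-\hat X^{n,s}_{\zeta T/n}|+|Y^{(s)}_\tau-\hat Y^{n,s}_{\eta T/n}|$, controlled in $L^1$ by (\ref{2+.101}); (ii) portfolio approximation $|V^\pi_{\sigma\wedge\tau}-V^{\tilde\pi}_{\theta^{(n)}_{\zeta\wedge\eta}}|$, which splits into the stochastic-integral difference $\int(\gamma-\tilde\gamma)\,dS^{(s)}$ and a transaction-cost difference (the latter is small since $\tilde\pi$ executes the same total trading as $\pi$, only bundled to the grid); and (iii) the replacement of an arbitrary $\tau\in\mathcal{T}_T$ in the definition of $\cR$ by one of the form $T\wedge\theta^{(n)}_\eta$, which is routine given (i) together with the continuity of the payoffs. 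Collecting the three bounds, taking the supremum over $\eta\in\hat{\mathcal T}_n$, and invoking (\ref{2+.new}) will give $R_n(T,s,z,y)\leq\cR(T,s,\pi,\sigma)+O(n^{-1/4}(\ln n)^{3/4})$, and then $\varepsilon\downarrow 0$ finishes the argument.

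The principal obstacle is piece (ii): since we have no a priori control on either the number or the magnitude of the jumps of $\gamma$, bounding $\int(\gamma-\tilde\gamma)\,dS^{(s)}$ requires an essential use of Lemma \ref{lem.growth}, whose uniform $L^2$ estimates on $\max_t|\gamma_t|S^{(s)}_t$ and on $\int_0^TS^{(s)}_u|d\gamma_u|$ combine with the Burkholder--Davis--Gundy inequality, Cauchy--Schwarz, and the mesh bound $\max_k|\theta^{(n)}_k-kT/n|=O(\sqrt{T/n}\sqrt{\log n})$ to produce the rate $n^{-1/4}(\ln n)^{3/4}$; this rate is then matched, rather than dominated, by the payoff rate in (\ref{2+.101}), which is why the exponent $1/4$ appears. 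Maintaining admissibility of $\tilde\pi$ despite the shifting of trades is the secondary technical point, absorbed into the cash buffer mentioned above.
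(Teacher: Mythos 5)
Your outline matches the paper's strategy in broad strokes (discretize a near-optimal continuous hedge onto the Skorokhod times $\theta^{(n)}_k$, compare via $\hat R_n=R_n$ from (\ref{2+.new}) and the payoff estimate (\ref{2+.101})), but the way you handle your ``principal obstacle'' -- item (ii) and admissibility -- is where the argument actually breaks. Admissibility of $\tilde\pi$ is a \emph{pathwise} requirement ($V^{\tilde\pi}_t\geq 0$ a.s.\ for all $t$), and the per-path discretization error between $V^\pi$ and $V^{\tilde\pi}$ on a dyadic interval is of order $\sqrt{T/n}$ times the running trading volume $\max_u|\gamma_u|S^{(s)}_u+\int S^{(s)}_u|d\gamma_u|$, which is an \emph{unbounded} random variable for which Lemma \ref{lem.growth} only supplies an $L^2(\mathbb Q)$ bound. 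A fixed cash buffer ``of size comparable to the final error'' therefore cannot absorb this error on every path, and no BDG or Cauchy--Schwarz estimate in expectation can substitute for the a.s.\ positivity you need. (A secondary issue: the shortfall is an expectation under $\mathbb P$, while Lemma \ref{lem.growth} is stated under $\mathbb Q$, so even your $L^1$ bound for piece (ii) requires a change of measure via $Z_T^{-1}$ and H\"older, which you do not mention.)

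The paper resolves exactly this point with two devices you are missing. First, it takes the near-optimal hedge with the \emph{reduced} initial capital $z_n=z-n^{-1/3}$ and stops trading at the time $\Upsilon$ when the running trading volume first exceeds the deterministic level $n^{1/7}$; by Lemma \ref{lem.growth} and Chebyshev $\mathbb Q(\Upsilon<T)=O(n^{-2/7})$, and after H\"older (to pass from $\mathbb Q$ to $\mathbb P$) this truncation costs only $O(n^{-1/4})$ in shortfall risk, while the capital shift costs $O(|z-z_n|^{3/4})=O(n^{-1/4})$ by Lemma \ref{lem3.2}. Second, with the trading volume now capped at $n^{1/7}$ and the discretized portfolio started from the full capital $z$, the extra $n^{-1/3}\gg \sqrt{T/n}\,n^{1/7}$ of cash suffices to prove the \emph{pointwise} inequality $V^{\tilde\pi}_{\theta^{(n)}_{k+1}}\geq V^{\dot\pi}_t$ on $(\theta^{(n)}_k,\theta^{(n)}_{k+1}]$ (display (\ref{2+.10})). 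This simultaneously gives admissibility of $\tilde\pi$ and makes your error term (ii) vanish from the comparison of positive parts, so that only the payoff term (\ref{2+.101}) contributes the rate $n^{-1/4}(\ln n)^{3/4}$. Without the deterministic volume cap and the capital perturbation, your construction does not yield an admissible element of $\mathcal{A}^{W,n}(s,z,y)$, and the proof is incomplete.
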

\begin{proof}
Fix $n$ and set $z_n=z-n^{-1/3}$. Assume that $n$ is sufficiently large, so $z_n\geq 0$.
Let $(\pi,\sigma)\in\mathcal{A}(T,s,z_n,y)$ be such that
 \begin{equation}\label{2+.newww}
R(T,s,z_n,y)>\cR(T,s,\pi,\sigma)-\frac{1}{n}.
\end{equation}
Introduce the random variable
$$\Gamma=\max_{0\leq t\leq T}|\gamma_t| S^{(s)}_t+\int_{0}^T S^{(s)}_u |d\gamma_u|.$$
Define the stopping time
$$\Upsilon=T\wedge\inf\left\{t: \max_{0\leq u\leq t}|\gamma_u| S^{(s)}_u+\int_{0}^t S^{(s)}_u |d\gamma_u|>n^{1/7}\right\},$$
and the portfolio $\dot\pi=(z,y,\dot\gamma)$ by
$\dot\gamma_t=\gamma_t\mathbb{I}_{t\leq\Upsilon}$. Namely, we liquidate the portfolio at the stopping time $\Upsilon$. Observe that the
initial capitals of the portfolios $\dot\pi$ and $\pi$, equal
to $z$ and $z_n$, respectively.
From Lemma \ref{lem.growth} and the Chebyshev inequality it follows that
$$\mathbb Q(\Upsilon<T)=\mathbb Q(\Gamma> n^{1/7})=O(n^{-2/7}).$$
This together with (\ref{2.martingale})--(\ref{2.3}) and the H\"older inequality (for $p=8$, $q=8/7$) gives
\begin{equation}\label{2+.0}
\cR(T,s,\dot{\pi},\sigma)-\cR(T,s,\pi,\sigma)\leq
\mathbb E_{\mathbb Q}\left(Z^{-1}_T\sup_{0\leq t\leq T}X^{(s)}_t\mathbb{I}_{\Upsilon<T}\right)=O(n^{-1/4}).
\end{equation}

Introduce the
portfolio $\tilde\pi=(z,y,\tilde\gamma)\in\mathcal{A}^{W,n}(s,z,y)$ which is managed at the
stopping times $0,\theta^{(n)}_1,...,\theta^{(n)}_n$ and is given by
$$\tilde\gamma_t=y\mathbb{I}_{t=0}+\sum_{i=0}^{n-1}\dot\gamma_{\theta^{(n)}_i}\mathbb{I}_{(\theta^{(n)}_i,\theta^{(n)}_{i+1}]}.$$
Let $t\in [0,T]$ and consider the event
$t\in (\theta^{(n)}_k,\theta^{(n)}_{k+1}]$ for some $k<n$.
From the integration by parts
formula we get
\begin{eqnarray*}
&V^{\dot\pi}_{t}=z_n
+\dot\gamma_t S^{(s)}_t-ys-
\left(\sum_{i=1}^k \int_{[\theta^{(n)}_{i-1},\theta^{(n)}_i)}S^{(s)}_u d\dot\gamma_u-
\int_{[\theta^{(n)}_{k},t)}S^{(s)}_u d\dot\gamma_u\right)-\\
&\left(\sum_{i=1}^k\sum_{u\in[\theta^{(n)}_{i-1},\theta^{(n)}_i)}g(\dot\gamma_{u+}-\dot\gamma_u, S^{(s)}_u)
+\sum_{u\in[\theta^{(n)}_{k},t)}g(\dot\gamma_{u+}-\dot\gamma_u, S^{(s)}_u)\right)\leq
\end{eqnarray*}
since $|S^{(s)}_u-S^{(s)}_{\theta^{(n)}_{k+1}}|
\leq 4\kappa\sqrt\frac{T}{n}\min(S^{(s)}_u,S^{(s)}_{\theta^{(n)}_{k+1}})$ for $u\in [\theta^{(n)}_k,\theta^{(n)}_{k+1}]$ and large $n$
\begin{eqnarray*}
&\leq z_n+
\dot\gamma_{t} S^{(s)}_{\theta^{(n)}_{k+1}}+4\kappa\dot\gamma_t\sqrt\frac{T}{n}S^{(s)}_{t}-ys-\\
&\sum_{i=1}^{k+1} S^{(s)}_{\theta^{(n)}_{i}}\int_{[\theta^{(n)}_{i-1},\theta^{(n)}_i)} d\dot\gamma_u
+4\kappa \sqrt\frac{T}{n}\int_{0}^T S^{(s)}_u |d\dot\gamma_u|-\\
&\sum_{i=1}^k g(\dot{\gamma}_{\theta^{(n)}_{i}}-\dot{\gamma}_{\theta^{(n)}_{i-1}}, S^{(s)}_{\theta^{(n)}_i})
+4\kappa\mu \sqrt\frac{T}{n}\int_{0}^T S^{(s)}_u |d\dot\gamma_u|-
\int_{[\theta^{(n)}_k,t)}\mu S^{(s)}_t|d\dot\gamma_t|\leq
\end{eqnarray*}
since
$\max_{0\leq u\leq T}|\dot\gamma_u| S^{(s)}_u+\int_{0}^T S^{(s)}_u |d\dot\gamma_u|\leq n^{1/7}$, then for large $n$
\begin{eqnarray*}
&\leq z-ys +\dot\gamma_{\theta^{(n)}_{k}} S^{(s)}_{\theta^{(n)}_{k+1}}+
S^{(s)}_{\theta^{(n)}_{k+1}}\int_{[\theta^{(n)}_{k},t)}|d\dot\gamma_u|-
\sum_{i=1}^{k+1} S^{(s)}_{\theta^{(n)}_{i}}\int_{[\theta^{(n)}_{i-1},\theta^{(n)}_i)} d\dot\gamma_u\\
&-\sum_{i=1}^k g(\dot{\gamma}_{\theta^{(n)}_{i}}-\dot{\gamma}_{\theta^{(n)}_{i-1}}, S^{(s)}_{\theta^{(n)}_i})-
\int_{[\theta^{(n)}_k,t)}\mu S^{(s)}_t|d\dot\gamma_t|\leq\\
&z-ys +\dot\gamma_{\theta^{(n)}_{k}} S^{(s)}_{\theta^{(n)}_{k+1}}-
\sum_{i=1}^{k+1} S^{(s)}_{\theta^{(n)}_{i+1}}\int_{[\theta^{(n)}_{i-1},\theta^{(n)}_i)} d\dot\gamma_u\\
&-
\sum_{i=1}^k g(\dot{\gamma}_{\theta^{(n)}_{i}}-\dot{\gamma}_{\theta^{(n)}_{i-1}}, S^{(s)}_{\theta^{(n)}_i})=
V^{\tilde\pi}_{\theta^{(n)}_{k+1}},
\end{eqnarray*}
where the last equality follows from the summation by parts.
We conclude that (for sufficiently large $n$)
\begin{equation}\label{2+.10}
V^{\tilde\pi}_{\theta^{(n)}_{k+1}}\geq V^{\dot\pi}_t, \ \ t\in (\theta^{(n)}_{k},\theta^{(n)}_{k+1}], \ \ k<n.
\end{equation}
In particular, $\tilde\pi\in\mathcal{A}^{W,n}(s,z,y)$ is an \textit{admissible} portfolio.

Let $\zeta\in\hat{\mathcal T_n}$ be given by
\begin{equation}\label{2+.9}
\zeta=n\wedge\min\{k:\theta^{(n)}_k\geq\sigma\}\,\,\mbox{if}\,\, \sigma<T
\ \mbox{and}\,\, \zeta=n\,\,\mbox{if}\,\,\sigma=T
\end{equation}
where $(\pi,\sigma)$ satisfies (\ref{2+.newww}).
From (\ref{2+.new})--(\ref{2+.0}) and Lemma \ref{lem3.2} we get
\begin{eqnarray*}
&R_n(T,s,z,y)-R(T,s,z,y)\leq O(|z_n-z|^{3/4})+\frac{1}{n}+O(n^{-1/4})+\\
&\hat R_n(T,s,z,y)-R(T,s,\dot\pi,\sigma)\leq O(n^{-1/4})+\\
&\sup_{\eta\in \hat{\mathcal T}_n}\mathbb{E}_{\mathbb P}\left(\hat X^{n,s}_{\frac{\zeta T}{n}}\mathbb{I}_{\zeta<\eta}+
\hat Y^{n,s}_{\frac{\eta T}{n}}\mathbb{I}_{\eta\leq\zeta}-V^{\tilde\pi}_{\theta^{(n)}_{\zeta\wedge\eta}}\right)^{+}-
\\
&\sup_{\tau\in\mathcal{T}_T} \mathbb{E}_{\mathbb P}\left(X^{(s)}_{\sigma}\mathbb{I}_{\sigma<\tau}+
Y^{(s)}_{\tau}\mathbb{I}_{\tau\leq\sigma}-V^{\dot\pi}_{\sigma\wedge\tau}\right)^{+}\leq
\end{eqnarray*}
since $T\wedge\theta^{(n)}_{\eta}\in\mathcal{T}_T$ for any $\eta\in\mathcal{T}_n$
\begin{eqnarray*}
&\leq O(n^{-1/4})+ \sup_{\eta\in \hat{\mathcal T}_n}\mathbb{E}_{\mathbb P}\left(\hat X^{n,s}_{\frac{\zeta T}{n}}\mathbb{I}_{\zeta<\eta}+
\hat Y^{n,s}_{\frac{\eta T}{n}}\mathbb{I}_{\eta\leq\zeta}-V^{\tilde\pi}_{\theta^{(n)}_{\zeta\wedge\eta}}\right)^{+}-\\
&\sup_{\eta\in \hat{\mathcal T}_n} \mathbb{E}_{\mathbb P}\left(X^{(s)}_{\sigma}\mathbb{I}_{\sigma<\theta^{(n)}_{\eta}\wedge T}+
Y^{(s)}_{\theta^{(n)}_{\eta}\wedge T}\mathbb{I}_{\theta^{(n)}_{\eta}\wedge T\leq\sigma}-V^{\dot\pi}_{\sigma\wedge\theta^{(n)}_{\eta}}\right)^{+}\leq
\end{eqnarray*}
since $\sigma<\theta^{(n)}_{\eta}\wedge T$ by (\ref{2+.9}) if $\zeta<\eta$
\begin{eqnarray*}
&\leq O(n^{-1/4})+\mathbb{E}_{\mathbb P}\left(|\hat X^{n,s}_{\frac{\zeta T}{n}}-X^{(s)}_{\sigma}|\mathbb{I}_{\sigma<\theta^{(n)}_n}\right)+
\sup_{\eta\in \hat{\mathcal T}_n}\mathbb{E}_{\mathbb P}|\hat Y^{n,s}_{\frac{\eta_n T}{n}}-Y^{(s)}_{\theta^{(n)}_{\eta}\wedge T}|+\\
&\sup_{\eta\in \hat{\mathcal T}_n}\mathbb{E}_{\mathbb P}
(V^{\dot\pi}_{\sigma\wedge\theta^{(n)}_{\eta}}-V^{\tilde\pi}_{\theta^{(n)}_{\zeta\wedge\eta}})^{+}\leq
\end{eqnarray*}
since $V^{\dot\pi_n}_{\theta^{(n)}_{\zeta\wedge\eta}}\geq V^{\dot\pi}_{\sigma\wedge\theta^{(n)}_{\eta}}$
by (\ref{2+.10}), together with (\ref{2+.101}),
\begin{eqnarray*}
&\leq O(n^{-1/4})+ \mathbb E_{\mathbb P}\left(\max_{1\leq k\leq n}\max_{\theta^{(n)}_{k-1}\leq t\leq\theta^{(n)}_k}
\left(|X^{(s)}_{t\wedge T}-\hat X^{n,s}_{\frac{kT}{n}}|+|Y^{(s)}_{t\wedge T}-\hat Y^{n,s}_{\frac{kT}{n}}|\right)\right)\\
&=O(n^{-1/4} (\ln n)^{3/4}),
\end{eqnarray*}
and the result follows.
\end{proof}
In view of Lemma \ref{lem2+.1}, in order to complete the proof of Theorem \ref{thm2.2} it remains to establish
the following result.
\begin{lem}
There exists a constant $C_3$ such that for any $n\in\mathbb N$ and $(\pi,\sigma)\in\mathcal{A}^{(n)}(T,s,z,y)$,
$$\cR(T,s,\tilde\pi,\tilde\sigma)\leq \cR_n(T,s,\pi,\sigma)+ C_3 n^{-1/4} (\ln n)^{3/4},$$
for $(\tilde\pi,\tilde\sigma)=\Psi_n(\pi,\sigma)$.
\end{lem}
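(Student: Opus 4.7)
The plan is to mirror the argument of Lemma \ref{lem2+.1}, but now in the reverse direction: lifting a discrete hedge to continuous time and measuring the resulting error. Fix $(\pi,\sigma)\in\cA^{(n)}(T,s,z,y)$ with lifted counterpart $(\tilde\pi,\tilde\sigma)=\Psi_n(\pi,\sigma)$, and write $\zeta=\Pi_n(\sigma)\in\hat{\cT}_n$. For an arbitrary candidate $\tau\in\cT_T$ I will discretize it by setting
$$\eta=n\wedge\min\{k:\theta^{(n)}_k\geq\tau\}\ \text{if}\ \tau<T,\qquad \eta=n\ \text{if}\ \tau=T.$$
This $\eta$ lies in $\hat{\cT}_n$, and one checks directly from $\tilde\sigma=T\wedge\theta^{(n)}_\zeta$ that, up to a $\mathbb P$-null set, $\{\tilde\sigma<\tau\}=\{\zeta<\eta\}$ and $\{\tau\leq\tilde\sigma\}=\{\eta\leq\zeta\}$.

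My next step is to apply the elementary inequality $|a^+-b^+|\leq|a-b|$ to compare the integrand defining $\cR(T,s,\tilde\pi,\tilde\sigma)$ at the candidate $\tau$ with the integrand in the $\hat R_n$ formulation at the candidate $\eta$. The resulting error splits into three pieces: a payoff mismatch on $\{\tilde\sigma<\tau\}$ of the form $|X^{(s)}_{\theta^{(n)}_\zeta}-\hat X^{n,s}_{\zeta T/n}|$; a payoff mismatch on $\{\tau\leq\tilde\sigma\}$ of the form $|Y^{(s)}_\tau-\hat Y^{n,s}_{\eta T/n}|$, where by construction $\tau\in[\theta^{(n)}_{\eta-1},\theta^{(n)}_\eta\wedge T]$; and a portfolio mismatch $V^{\tilde\pi}_{\tilde\sigma\wedge\tau}$ against $V^{\tilde\pi}_{\theta^{(n)}_{\zeta\wedge\eta}}$. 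Both payoff contributions are dominated uniformly in $\tau$ by the right-hand side of (\ref{2+.101}), which already delivers the rate $n^{-1/4}(\ln n)^{3/4}$.

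The genuinely new ingredient is the portfolio mismatch. On $\{\zeta<\eta\}$ we have $\tilde\sigma\wedge\tau=\theta^{(n)}_\zeta=\theta^{(n)}_{\zeta\wedge\eta}$ and the difference vanishes. On $\{\eta\leq\zeta\}$ we have $\tilde\sigma\wedge\tau=\tau$ and $\theta^{(n)}_{\zeta\wedge\eta}=\theta^{(n)}_\eta$; since $\tilde\pi$ is not rebalanced on $(\theta^{(n)}_{\eta-1},\theta^{(n)}_\eta]$, the difference reduces to the ``half-step'' residual
$$V^{\tilde\pi}_{\theta^{(n)}_\eta}-V^{\tilde\pi}_\tau=\tilde\gamma_{\theta^{(n)}_\eta}\bigl(S^{(s)}_{\theta^{(n)}_\eta}-S^{(s)}_\tau\bigr).$$
The Skorokhod construction yields the pointwise bound $|S^{(s)}_u-S^{(s)}_{\theta^{(n)}_{k-1}}|\leq c_\kappa\sqrt{T/n}\,S^{(s)}_{\theta^{(n)}_{k-1}}$ for $u\in[\theta^{(n)}_{k-1},\theta^{(n)}_k]$ (for $n$ large). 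Combining this with Lemma \ref{lem.growth} via Cauchy--Schwarz under $\mathbb Q$ produces an $L^1(\mathbb Q)$-bound of order $n^{-1/2}$ on the residual, and transporting it to $\mathbb P$ through the density $Z_T^{-1}$ (whose $L^p(\mathbb Q)$-norms are finite for every $p$) by H\"older's inequality preserves this rate up to an arbitrarily small polynomial loss, which is absorbed in $C_3 n^{-1/4}(\ln n)^{3/4}$.

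Taking the supremum over $\tau\in\cT_T$ on the left-hand side and then applying (\ref{2+.new}) to identify the resulting $\sup_\eta\hat R_n$-style expression with $\cR_n(T,s,\pi,\sigma)$ concludes the proof. The main obstacle is precisely the portfolio residual estimate above: it is what forces the use of Lemma \ref{lem.growth} together with a change of measure from $\mathbb Q$ to $\mathbb P$, while all other steps run parallel to those already deployed in the discretization direction of Lemma \ref{lem2+.1}.
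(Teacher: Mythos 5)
Your overall architecture is the right one and matches what the paper does by deferring to the proof of (2.26) in \cite{DK2}: discretize each continuous exercise time $\tau$ to an $\eta\in\hat{\mathcal T}_n$, compare integrands termwise, absorb the payoff mismatches into (\ref{2+.101}), and treat the portfolio mismatch separately. One genuine difference of route: the paper insists that the \emph{only} property needed for the portfolio term is that $V^{\tilde\pi}$ is a $\mathbb Q$--super--martingale, whereas you replace this by the quantitative growth bound of Lemma \ref{lem.growth} together with the one--step estimate $|S^{(s)}_u-S^{(s)}_{\theta^{(n)}_{k-1}}|\leq c_\kappa\sqrt{T/n}\,S^{(s)}_{\theta^{(n)}_{k-1}}$ and a Cauchy--Schwarz/change--of--measure argument. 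That substitution is legitimate (note only that $\theta^{(n)}_\eta$ may exceed $T$, so you should pass through $\tilde\gamma_{\theta^{(n)}_\eta}=\tilde\gamma_\tau$ and the one--step ratio bound before invoking the $\max_{0\leq t\leq T}$ in Lemma \ref{lem.growth}), and it in fact gives a better rate $O(n^{-1/2})$ for that term; it is also the more robust argument in view of the Remark after Theorem \ref{thm2.2}, where the super--martingale property fails.

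There is, however, a concrete gap in your stopping--time correspondence. The claim that $\{\tilde\sigma<\tau\}=\{\zeta<\eta\}$ up to a null set is false near maturity: on the event $\{\tau=T\}\cap\{\zeta<n\}\cap\{\theta^{(n)}_\zeta\geq T\}$, which has positive probability, your $\eta$ equals $n>\zeta$ while $\tilde\sigma=T\wedge\theta^{(n)}_\zeta=T=\tau$, so $\tilde\sigma<\tau$ fails. On this event your termwise comparison pairs the continuous integrand $(Y^{(s)}_T-V^{\tilde\pi}_{T})^{+}$ with the discrete integrand $(\hat X^{n,s}_{\zeta T/n}-V^{\tilde\pi}_{\theta^{(n)}_\zeta})^{+}$, and $|a^+-b^+|\leq|a-b|$ is useless there because $X-Y$ need not be small. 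The gap is repairable precisely because you only need a one--sided bound: $X^{(s)}\geq Y^{(s)}$ gives $\hat X^{n,s}_{\zeta T/n}\geq Y^{(s)}_T-O(n^{-1/4}(\ln n)^{3/4})$ via (\ref{2+.101}) (using $\theta^{(n)}_\zeta\geq T$ there), and the portfolio discrepancy $V^{\tilde\pi}_{\theta^{(n)}_\zeta}-V^{\tilde\pi}_T$ is again your half--step residual; but this case analysis must be written out rather than dismissed as a null event. A second, smaller imprecision: equation (\ref{2+.new}) is an identity between the two \emph{infima} $\hat R_n$ and $R_n$; what you actually need at the end is the hedge--level identity $\sup_{\eta\in\hat{\mathcal T}_n}\mathbb{E}_{\mathbb P}\bigl(\hat X^{n,s}_{\zeta T/n}\mathbb I_{\zeta<\eta}+\hat Y^{n,s}_{\eta T/n}\mathbb I_{\eta\leq\zeta}-V^{\tilde\pi}_{\theta^{(n)}_{\zeta\wedge\eta}}\bigr)^{+}=\cR_n(T,s,\pi,\sigma)$, which follows from the distribution--preserving property of $\Pi_n$ (the same argument that yields (\ref{2+.new})) but should be invoked as such. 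You should also verify that your $\eta$ is indeed a stopping time for $\{\mathcal F_{\theta^{(n)}_k}\}_{k=0}^n$, since $\{\eta\leq k\}$ involves the event $\{\tau<T\}$.
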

\begin{proof}
The proof follows the proof of (2.26) in \cite{DK2}.
In the proof of (2.26) in \cite{DK2} we showed that if we lift a hedge to the BS model, the shortfall risk can increase only by the amount
$O(n^{-1/4} (\ln n)^{3/4})$.
Though in \cite{DK2}
there is no friction, the only property that we used there is that
the portfolio value process is a super--martingale with respect to the martingale
measure $\mathbb Q$. In the current setup, this fact remains true,
and so we just follow the proof from \cite{DK2}.
\end{proof}

\section{Regularity properties of shortfall risk}\label{sec3}
In this section we do not assume Lipschitz continuity of the functions $F,G$ (that is (\ref{condition})) but
just continuity and (\ref{2.3}).
We start with the following lemma.
\begin{lem}\label{lem3.time}
Let $v\in C_{++}[0,T]$.
There exists a continuous function $m_v:\mathbb {R}_{+}\rightarrow\mathbb{R}_{+}$ (modulus of continuity)
with $m_v(0)=0$ such that for any $\mathbf T_1,\mathbf T_2\in [0, T]$,
$z\geq 0$
and $y\in \mathbb R$, we have
$$\left|{R}(\mathbf T_1,v,z,y)-R\left(\mathbf T_2,v,z,\frac{y v(T-\mathbf T_1)}{v(T-\mathbf T_2)}\right)\right|\leq m_v(|\mathbf T_1-\mathbf T_2|).$$
\end{lem}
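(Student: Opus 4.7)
The plan is to prove both one-sided inequalities by transporting hedges between the two problems. WLOG take $\mathbf T_1<\mathbf T_2$, put $\epsilon=\mathbf T_2-\mathbf T_1$, $s_i=v(T-\mathbf T_i)$, and $y_2=ys_1/s_2$, and couple the two BS models on the same Brownian motion $W$ so that $S^{(s_1)}_t=(s_1/s_2)S^{(s_2)}_t$ for all $t\ge 0$. The key algebraic observation is that the rescaling $(z,y,\gamma)\mapsto (z,\,(s_1/s_2)y,\,(s_1/s_2)\gamma)$ leaves the portfolio-value formula \eqref{2.4} invariant: both the stochastic integral $\int\gamma\,dS^{(s)}$ and the cost $g(\beta,S)=\max(\delta,\mu|\beta|S)\mathbb I_{\beta\ne 0}$ are invariant because the dollar amount $|\beta|S$ is preserved. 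Hence, under the coupling, a hedge for one market transports through this rescaling into a hedge for the other with an identical value process on the common time interval.

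To show $R(\mathbf T_1,v,z,y)\le R(\mathbf T_2,v,z,y_2)+m_v(\epsilon)$, I would pick a near-optimal $(\pi_2,\sigma_2)\in\cA(\mathbf T_2,s_2,z,y_2)$, set $\gamma^1_t=(s_2/s_1)\gamma^2_t$ on $[0,\mathbf T_1]$, let $\pi_1=(z,y,\gamma^1)$, and take $\sigma_1=\sigma_2\wedge\mathbf T_1$. Then $V^{\pi_1}_t=V^{\pi_2}_t$ on $[0,\mathbf T_1]$ so $\pi_1$ is admissible, and pairing each $\tau_1\in\cT_{\mathbf T_1}$ with $\tau_2:=\tau_1\in\cT_{\mathbf T_2}$ gives $\sigma_1\wedge\tau_1=\sigma_2\wedge\tau_2$. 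A short case analysis on whether $\sigma_2$ is less than, equal to, or greater than $\mathbf T_1$ shows that the indicators $\mathbb I_{\sigma_i<\tau_i}$ and $\mathbb I_{\tau_i\le\sigma_i}$ agree between the two problems, reducing the pointwise shortfall difference to payoff discrepancies of the form $|G(S^{\mathbf T_1,v})(t+T-\mathbf T_1)-G(S^{\mathbf T_2,v})(t+T-\mathbf T_2)|$ (and the analogue for $F$), whose two evaluation calendar times differ by at most $\epsilon$.

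For the reverse bound, given near-optimal $(\pi_1,\sigma_1)$, set $\gamma^2_t=(s_1/s_2)\gamma^1_t$ on $[0,\mathbf T_1]$ and $\gamma^2\equiv 0$ on $(\mathbf T_1,\mathbf T_2]$, i.e.\ liquidate at $\mathbf T_1$; a direct computation from \eqref{2.4} gives $V^{\pi_2}_t=V^{\pi_1}_{t\wedge\mathbf T_1}$ and preserves admissibility (the liquidation fee at $\mathbf T_1$ is absorbed by the $-g(\gamma_t,S_t)$ already subtracted in the cash-settlement formula, so holding cash on $(\mathbf T_1,\mathbf T_2]$ keeps the value constant). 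The delicate point is the choice of cancellation time: the naive $\sigma_2=\sigma_1$ produces an $O(1)$ gap on the event $\{\sigma_1=\mathbf T_1\}$, where the problem-2 buyer can wait past $\mathbf T_1$ and collect the cancellation value $X^{\mathbf T_2,v}$ rather than $Y^{\mathbf T_1,v}$. I would resolve this by taking the stopping time $\sigma_2:=\sigma_1\mathbb I_{\sigma_1<\mathbf T_1}+\mathbf T_2\mathbb I_{\sigma_1=\mathbf T_1}$, which on the boundary event forces the seller to wait until $\mathbf T_2$ so that the payoff is always $Y^{\mathbf T_2,v}_{\tau_2}$; pairing $\tau_1:=\tau_2\wedge\mathbf T_1$ then again leaves only payoff discrepancies of the above form with calendar-time offsets at most $\epsilon$.

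Finally, the modulus $m_v$ is assembled from (i) the modulus of continuity of $v$, giving $|s_1-s_2|\to 0$, (ii) the uniform continuity of Brownian paths, giving $\sup_u|S^{(s)}_u-S^{(s)}_{u+\epsilon}|\to 0$ in probability under the coupling, and hence $\|S^{\mathbf T_1,v}-S^{\mathbf T_2,v}\|_\infty\to 0$, and (iii) the continuity of $F,G:C[0,T]\to C[0,T]$ in both path and time arguments. The polynomial growth \eqref{2.3} together with the finiteness of all polynomial moments of $\sup_t S^{(s)}_t$ provides uniform integrability, converting a.s.\ convergence of payoff differences into convergence in expectation at a rate depending only on $v$, uniformly in $z,y$. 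The main obstacle is the boundary-event mismatch at $\mathbf T_1$ in the reverse inequality, resolved by the clever choice of $\sigma_2$ above; the rest is routine bookkeeping.
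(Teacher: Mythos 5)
Your proposal is correct and follows essentially the same route as the paper: the dollar-preserving rescaling of the share process (which leaves \eqref{2.4} invariant), restriction with $\sigma_1=\sigma_2\wedge\mathbf T_1$ in one direction, liquidation at $\mathbf T_1$ together with the boundary fix $\sigma_2=\sigma_1\mathbb I_{\sigma_1<\mathbf T_1}+\mathbf T_2\mathbb I_{\sigma_1=\mathbf T_1}$ in the other, and a modulus $m_v$ built from the expected sup-difference of the payoffs via \eqref{2.3} and continuity of $F,G$. This is exactly the paper's argument, so nothing further is needed.
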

\begin{proof}
Let $\mathbf T_1,\mathbf T_2\in [0, T]$,
$z\geq 0$ and $y\in\mathbb R$.
Without loss of generality, we assume that $\mathbf T_1<\mathbf T_2$. Choose $\epsilon>0$.
There exists a hedge $(\pi_1,\sigma_1)\in \mathcal{A}\left(\mathbf T_1,v(T-\mathbf T_1),z,y\right)$ such that
\begin{equation}\label{3.200}
R(\mathbf T_1,v,z,y)> \cR(\mathbf T_1,v,\pi_1,\sigma_1)-\epsilon.
\end{equation}
Set $\pi_1=(z,y,\gamma^{(1)})$.
Let $(\pi_2,\sigma_2)\in\mathcal{A}\left(\mathbf T_2,v(T-\mathbf T_2),z,\frac{y v(T-\mathbf T_1)}{v(T-\mathbf T_2)}\right)$
be a hedge such that $\pi_2=\left(z,y ,\gamma^{(2)}\right )$ is given by
$\gamma^{(2)}_t=\mathbb{I}_{t\leq \mathbf T_1}\frac{\gamma^{(1)}_t v(T-\mathbf T_1)}{v(T-\mathbf T_2)}, \ \ t\in [0,\mathbf T_2],$
and
$\sigma_2=\sigma_1\mathbb{I}_{\sigma_1<\mathbf T_1}+\mathbf T_2\mathbf{I}_{\sigma_1=\mathbf T_1}.$
Namely, the portfolio $\gamma^{(2)}$ is proportional $\gamma^{(1)}$ until
the moment $\mathbf T_1$ and then we sell the stocks. The stopping time
 $\sigma_2$ is almost the same as $\sigma_1$ with a small modification such that
 if $\sigma_1$ is equal to $\mathbf T_1$ then $\sigma_2=\mathbf T_2.$
From (\ref{2.4}) it follows that $(\pi_2,\sigma_2)\in \mathcal{A}\left(\mathbf T_2,v(T-\mathbf T_2),z,\frac{y v(T-\mathbf T_1)}{v(T-\mathbf T_2)}\right)$ and
 $V^{\pi_2}_{\sigma_2\wedge t}=V^{\pi_1}_{\sigma_1\wedge t}$, $t\in [0,\mathbf T_2].$

Let $\tau\in\mathcal{T}_{\mathbf T_2}$. Observe that if $\sigma_2<\tau$ then $\sigma_1<\tau\wedge\mathbf T_1$. Thus from (\ref{3.200}),
\begin{eqnarray*}
&R\left(\mathbf T_2,v,z,\frac{y v(T-\mathbf T_1)}{v(T-\mathbf T_2)}\right)\leq \cR(\mathbf T_2,v,\pi_2,\sigma_2)=\\
&\sup_{\tau\in\mathcal{T}_{\mathbf T_2}}\mathbb{E}_{\mathbb P}
\left(X^{\mathbf T_2,v}_{\sigma_2}\mathbb{I}_{\sigma_2<\tau}+
Y^{\mathbf T_2,v}_{\tau}\mathbb{I}_{\tau\leq\sigma_2}-V^{\pi_2}_{\sigma_2\wedge\tau}\right)^{+}\leq \\
& \sup_{\tau\in\mathcal{T}_{\mathbf T_2}}\mathbb{E}_{\mathbb P}
\left(X^{\mathbf T_1,v}_{\sigma_1}\mathbb{I}_{\sigma_1<\tau\wedge\mathbf T_1}+
Y^{\mathbf T_1,v}_{\tau\wedge\mathbf T_1}\mathbb{I}_{\tau\wedge\mathbf T_1\leq\sigma_1}-V^{\pi_1}_{\sigma_1\wedge\tau}\right)^{+}+\\
&\mathbb{E}_{\mathbb P}\left(
\sup_{0\leq t\leq \mathbf T_2}\left(|Y^{\mathbf T_1,v}_{t\wedge\mathbf T_1}-Y^{\mathbf T_2,v}_{t}|+
|X^{\mathbf T_1,v}_{t\wedge\mathbf T_1}-X^{\mathbf T_2,v}_{t}|\right)\right)\leq\\
&\epsilon+ R(\mathbf T_1,v,z,y)+m_v(|\mathbf T_1-\mathbf T_2|),
  \end{eqnarray*}
where
$$m_v(\delta)=\sup_{|t_2-t_1|\leq \delta}\mathbb{E}_{\mathbb P}\left(
\sup_{0\leq t\leq T}\left(|Y^{ t_1,v}_{t\wedge t_1}-Y^{ t_2,v}_{t}|+
|X^{t_1,v}_{t\wedge t_1}-X^{ t_2,v}_{t}|\right)\right).$$

From (\ref{2.3}) and the fact that $F,G$ are continuous it follows that $m_v$ is indeed a modulus of continuity.
Since $\epsilon>0$ was arbitrary we get
$$R\left(\mathbf T_2,v,z,\frac{y v(T-\mathbf T_1)}{v(T-\mathbf T_2)}\right)-R(\mathbf T_1,v,z,y)\leq m_v(|\mathbf T_1-\mathbf T_2|).$$
Next, we prove
$${R}(\mathbf T_1,v,z,y)-R\left(\mathbf T_2,v,z,\frac{y v(T-\mathbf T_1)}{v(T-\mathbf T_2)}\right)\leq m_v(|\mathbf T_1-\mathbf T_2|).$$
Choose $\epsilon>0$
and a hedge $(\tilde\pi_2,\tilde\sigma_2)\in\mathcal{A}\left(\mathbf T_2,v(T-\mathbf T_2),z,\frac{y v(T-\mathbf T_1)}{v(T-\mathbf T_2)}\right)$
which satisfy
\begin{equation*}\label{3.201}
R\left(\mathbf T_2,v,z,\frac{y v(T-\mathbf T_1)}{v(T-\mathbf T_2)}\right)> \cR(\mathbf T_2,v,\tilde\pi_2,\tilde\sigma_2)-\epsilon.
\end{equation*}
Denote $\tilde\pi_2=\left(z,\frac{y v(T-\mathbf T_1)}{v(T-\mathbf T_2)},\tilde\gamma^{(2)}\right)$.
Let $(\tilde\pi_1,\tilde\sigma_1)\in\mathcal{A}\left(\mathbf T_1,v(T-\mathbf T_1),z,y\right)$ be a hedge such that
$\tilde\pi_1=(z,y,\tilde\gamma^{(1)})$ is given by
$\tilde\gamma^{(1)}_t=\frac{\gamma^{(2)}_t v(T-\mathbf T_2)}{v(T-\mathbf T_1)}$, $t\in [0,\mathbf T_1]$, and
$\tilde\sigma_1=\tilde\sigma_2\wedge\mathbf T_1.$
Namely, we take a multiple of the hedge $(\tilde\pi_2,\tilde\sigma_2)$ and restrict it to the interval $[0,\mathbf T_1]$.
Let $\tau\in\mathcal{T}_{\mathbf T_1}$ and observe that if $\sigma_2<\tau$ then $\sigma_1=\sigma_2<\tau$. Thus we obtain,
 \begin{eqnarray*}
&R(\mathbf T_1,v,z,y)\leq \cR(\mathbf T_1,v,\tilde\pi_1,\tilde\sigma_1)=\\
&\sup_{\tau\in\mathcal{T}_{\mathbf T_1}}\mathbb{E}_{\mathbb P}
\left(X^{\mathbf T_1,v}_{\tilde\sigma_1}\mathbb{I}_{\tilde\sigma_1<\tau}+
Y^{\mathbf T_1,v}_{\tau}\mathbb{I}_{\tau\leq\tilde\sigma_1}-V^{\tilde\pi_1}_{\tilde\sigma_1\wedge\tau}\right)^{+}\\
&\leq \sup_{\tau\in\mathcal{T}_{\mathbf T_1}}\mathbb{E}_{\mathbb P}
\left(X^{\mathbf T_2,v}_{\tilde\sigma_2}\mathbb{I}_{\tilde\sigma_2<\tau}+
Y^{\mathbf T_2,v}_{\tau}\mathbb{I}_{\tau\leq\tilde\sigma_2}-V^{\tilde\pi_2}_{\tilde\sigma_2\wedge\tau}\right)^{+}+\\
&\mathbb{E}_{\mathbb P}\left(
\sup_{0\leq t\leq \mathbf T_2}\left(|Y^{\mathbf T_1,v}_{t\wedge\mathbf T_1}-Y^{\mathbf T_2,v}_{t}|+
|X^{\mathbf T_1,v}_{t\wedge\mathbf T_1}-X^{\mathbf T_2,v}_{t}|\right)\right)\leq\\
&\epsilon+ R\left(\mathbf T_2,v,z,\frac{y v(T-\mathbf T_1)}{v(T-\mathbf T_2)}\right)+m_v(|\mathbf T_1-\mathbf T_2|),
 \end{eqnarray*}
and by taking $\epsilon\downarrow 0$ we complete the proof.
\end{proof}

Next, we obtain the following simple result.
\begin{lem}\label{lem3.stock}
Let $v\in C_{++}[0,T]$.
There exists a continuous function $\tilde m_{v}:\mathbb {R}_{+}\rightarrow\mathbb{R}_{+}$
with $\tilde m_{v}(0)=0$ such that
for any $\tilde v\in C_{++}[0,T]$, $\mathbf T\in [0,T]$,
$z\geq 0$
and $y\in \mathbb R$, we have
$$\left|{R}\left(\mathbf T,v,z,y\right)-R\left(\mathbf T,\tilde v,z,\frac{y v(T-\mathbf T)}{\tilde v(T-\mathbf T)} \right)\right|\leq \tilde m_{v}(||v-\tilde v||).$$
\end{lem}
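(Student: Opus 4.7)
The plan is to mimic the proof of Lemma \ref{lem3.time}, exploiting the scaling invariance of geometric Brownian motion in its initial value. Writing $s = v_{T-\mathbf T}$, $\tilde s = \tilde v_{T-\mathbf T}$, and $\tilde y = y s/\tilde s$, one has the identity $S^{(\tilde s)}_t = (\tilde s/s) S^{(s)}_t$ for every $t \geq 0$, which will let me put hedges for the two problems into a value-preserving bijection by a simple rescaling.

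Concretely, given $(\pi_1,\sigma_1) \in \mathcal{A}(\mathbf T, s, z, y)$ with $\pi_1 = (z, y, \gamma^{(1)})$, I would define $\pi_2 = (z, \tilde y, \gamma^{(2)})$ with $\gamma^{(2)}_t = (s/\tilde s)\gamma^{(1)}_t$ and take $\sigma_2 = \sigma_1$. A direct computation from (\ref{2.4}) should yield $V^{\pi_2}_t = V^{\pi_1}_t$ for all $t \in [0, \mathbf T]$: the stochastic integral satisfies $\int \gamma^{(2)} dS^{(\tilde s)} = \int \gamma^{(1)} dS^{(s)}$, and the cost function $g(\beta, S) = \max(\delta, \mu|\beta|S)\mathbb{I}_{\beta \neq 0}$ is invariant under the substitution $(\beta, S) \mapsto ((s/\tilde s)\beta, (\tilde s/s)S)$ since both entries of the max are preserved. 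Consequently $\pi_2 \in \mathcal{A}(\mathbf T, \tilde s, z, \tilde y)$, and the elementary inequality $|a^+ - b^+| \leq |a - b|$ together with the triangle inequality will give
\begin{equation*}
|\cR(\mathbf T, v, \pi_1, \sigma_1) - \cR(\mathbf T, \tilde v, \pi_2, \sigma_2)| \leq \mathbb E_{\mathbb P}\left[\sup_{0 \leq t \leq \mathbf T}\bigl(|X^{\mathbf T, v}_t - X^{\mathbf T, \tilde v}_t| + |Y^{\mathbf T, v}_t - Y^{\mathbf T, \tilde v}_t|\bigr)\right].
\end{equation*}
Picking near-optimal hedges for both $v$ and $\tilde v$, running this transformation in both directions, and letting $\epsilon \downarrow 0$ will reduce the lemma to controlling the right-hand side by a function of $||v - \tilde v||$ alone.

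It then remains to define
\begin{equation*}
\tilde m_v(\delta) := \sup_{\substack{\tilde v \in C_{++}[0,T]\\ ||\tilde v - v|| \leq \delta}}\;\sup_{\mathbf T \in [0,T]}\mathbb E_{\mathbb P}\left[\sup_{0 \leq t \leq T}\bigl(|X^{\mathbf T, v}_{t\wedge \mathbf T} - X^{\mathbf T, \tilde v}_{t\wedge \mathbf T}| + |Y^{\mathbf T, v}_{t\wedge\mathbf T} - Y^{\mathbf T, \tilde v}_{t\wedge\mathbf T}|\bigr)\right]
\end{equation*}
and to show that $\tilde m_v(\delta) \to 0$ as $\delta \to 0$, which I expect to be the main obstacle. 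The pathwise bound
\begin{equation*}
||S^{\mathbf T, v} - S^{\mathbf T, \tilde v}|| \leq ||v - \tilde v||\Bigl(1 + \sup_{0 \leq t \leq T}\exp\bigl(\kappa W_t + (\vartheta - \kappa^2/2) t\bigr)\Bigr),
\end{equation*}
which is uniform in $\mathbf T$, follows directly from the definition of $S^{\mathbf T, v}$. The continuity of $F, G \colon C[0,T] \to C[0,T]$ in the uniform topology then gives a.s.\ convergence of the integrand to zero as $\delta \to 0$, uniformly in $\mathbf T$; the polynomial growth bound (\ref{2.3}) combined with the standard moment estimates for $\sup_{0 \leq t \leq T} S^{(s)}_t$ provides uniform integrability, so dominated convergence yields $\tilde m_v(\delta) \to 0$. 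The function so defined is monotone in $\delta$, and may be replaced by a continuous majorant (for instance $\delta \mapsto \delta + 2\sup_{0 \leq x \leq \delta}\tilde m_v(x)$) to obtain the claimed modulus of continuity.
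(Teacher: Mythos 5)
Your proposal is correct and follows essentially the same route as the paper: the value-preserving rescaling bijection $\gamma\mapsto (v(T-\mathbf T)/\tilde v(T-\mathbf T))\gamma$ between the two sets of admissible portfolios (using the scaling identity for $S^{(s)}$ and the invariance of $g$), the resulting bound by the expected sup-difference of the payoffs, and the same definition of $\tilde m_v$ controlled via continuity of $F,G$ together with the growth condition (\ref{2.3}). You actually supply more detail than the paper does on why $V^{\pi_1}=V^{\pi_2}$ and why $\tilde m_v(\delta)\to 0$.
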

\begin{proof}
Fix $\mathbf T\in [0,T]$, $\tilde v\in C_{++}[0,T]$,
$z\geq 0$
and $y\in \mathbb R$.
Let
$\pi_1=(z,y,\gamma)$ be an \textit{admissible} portfolio in the market with maturity date $\mathbf T$ and an initial stock price
$v(T-\mathbf T)$. Consider the portfolio
$\pi_2=\left(z,\frac{y v(T-\mathbf T)}{\tilde v(T-\mathbf T)},\frac{\gamma v(T-\mathbf T)}{\tilde v(T-\mathbf T)}\right)$ as
an \textit{admissible} portfolio in the market with maturity date $\mathbf T$ and an initial stock price
$\tilde v(T-\mathbf T)$. The map $\pi_1\rightarrow\pi_2$ is a bijection between the corresponding sets of portfolios.
From (\ref{2.4}) it follows that
$V^{\pi_1}=V^{\pi_2}$, and so
\begin{eqnarray*}
&\left|{R}\left(\mathbf T,v,z,y\right)-R\left(\mathbf T,\tilde v,z,\frac{y v(T-\mathbf T)}{\tilde v(T-\mathbf T)} \right)\right|\leq\\
& \mathbb{E}_{\mathbb P}\bigg(
\sup_{0\leq t\leq \mathbf T}\left(|Y^{\mathbf T,v}_{t}-Y^{\mathbf T,\tilde v}_{t}|+
|X^{\mathbf T,v}_{t}-X^{\mathbf T,\tilde v}_{t}|\right)\bigg)\leq \tilde m_{v}(||v-\tilde v||)
\end{eqnarray*}
where
$$\tilde m_{v}(\delta)=\sup_{\mathbf T\in [0,T]}\sup_{||\tilde v-v||\leq\delta}\mathbb{E}_{\mathbb P}\left(
||X^{\mathbf T,v}-X^{\mathbf T,\tilde v}||+||Y^{\mathbf T,v}-Y^{\mathbf T,\tilde v}||
\right).$$
From (\ref{2.3}) and the fact that $F,G$ are continuous it follows that $\tilde m_{v}$ is indeed a modulus of continuity.
\end{proof}
Next, we establish continuity properties of the shortfall risk as a function of the initial position.
\begin{lem}\label{lem3.2}
Let $K>0$. There exists a constant $\hat C=\hat C(K)$ such that the following holds.
For any $v\in C_{++}[0,T]$,
$\mathbf T\in [0,T]$,  $(z_i,y_i)\in\mathbb{R}_{+}\times\mathbb{R}$, $i=1,2$, such that
$||v||\leq K$ and $y_1=0 \Leftrightarrow y_2=0$, we have
$$|R(\mathbf T,v,z_1,y_1)-R(\mathbf T,v,z_2,y_2)|\leq \hat C(|z_1-z_2|+|y_1-y_2|)^{3/4}.$$
\end{lem}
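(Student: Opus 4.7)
The goal is to construct, from a near-optimal hedge for $(z_1,y_1)$, an admissible hedge for $(z_2,y_2)$ whose shortfall risk differs by at most $\hat{C}\Delta^{3/4}$, where $\Delta:=|z_1-z_2|+|y_1-y_2|$; the reverse direction follows by symmetry. Since the trivial buy-and-hold hedge together with (\ref{2.3}) bounds $R(\mathbf{T},v,z,y)\leq \mathbb{E}_{\mathbb{P}}[\sup_t X^{\mathbf{T},v}_t]\leq C(K)$, the inequality is trivial for $\Delta$ bounded away from zero, so I restrict to small $\Delta$; analogous comparisons with near-trivial hedges further reduce to $(z_1,y_1)$ lying in a bounded set depending on $K$. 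Fix $\epsilon>0$ and set $s:=v_{T-\mathbf{T}}$. Choose $(\pi_1,\sigma_1)\in\mathcal{A}(\mathbf{T},s,z_1,y_1)$ with $\pi_1=(z_1,y_1,\gamma^{(1)})$ whose shortfall risk is within $\epsilon$ of $R(\mathbf{T},v,z_1,y_1)$. Lemma \ref{lem.growth} gives $\mathbb{E}_{\mathbb{Q}}[\mathcal{G}^2]\leq \bar{C}(K)$ for the growth quantity
\[
\mathcal{G}:=\sup_{t\leq\mathbf{T}}|\gamma^{(1)}_t|S^{(s)}_t+\int_0^{\mathbf{T}}S^{(s)}_u\,|d\gamma^{(1)}_u|,
\]
and a change of measure via (\ref{2.martingale}) then yields $\mathcal{G}\in L^p(\mathbb{P})$ for every $p<2$.

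The plan is to couple $\pi_1$ with a candidate $\pi_2=(z_2,y_2,\gamma^{(2)})$ whose position process has the same zero-pattern as $\gamma^{(1)}$, so as to avoid introducing spurious fixed transaction costs $\delta$. Concretely, set $\gamma^{(2)}_t:=(y_2/y_1)\gamma^{(1)}_t$ when $y_1 y_2>0$, $\gamma^{(2)}:=\gamma^{(1)}$ when $y_1=y_2=0$, and note that in the opposite-sign case $|y_1|,|y_2|\leq\Delta$ where a piecewise variant of the same scaling applies. Combining (\ref{2.4}) with the elementary bound $|g(\rho a,S)-\rho g(a,S)|\leq \delta|1-\rho|$ (valid for $\rho>0$), a direct computation gives
\[
|V^{\pi_2}_t-V^{\pi_1}_t|\leq C\Delta\,\bigl(1+\mathcal{G}+V^{\pi_1}_t\bigr),\qquad 0\leq t\leq\mathbf{T},
\]
where $C$ depends only on the data from the first reduction. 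Since $V^{\pi_2}$ may become negative, define $\Upsilon:=\mathbf{T}\wedge\inf\{t:V^{\pi_2}_t<0\}$ and let $\tilde{\pi}_2$ be the liquidation of $\pi_2$ at $\Upsilon$; left-continuity of $V^{\pi_2}$ yields $V^{\tilde{\pi}_2}_t=V^{\pi_2}_{t\wedge\Upsilon}\geq 0$, so $\tilde{\pi}_2\in\mathcal{A}(\mathbf{T},s,z_2,y_2)$.

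Applying the hedge $(\tilde{\pi}_2,\sigma_1)$ and the inequality $|(a-b)^+-(a-c)^+|\leq|b-c|$ bounds the shortfall-risk difference by $\mathbb{E}_{\mathbb{P}}[\sup_{t\leq\mathbf{T}}|V^{\pi_1}_t-V^{\tilde{\pi}_2}_t|]$. On $\{\Upsilon=\mathbf{T}\}$ the integrand is at most $C\Delta(1+\mathcal{G}+\sup V^{\pi_1})$, contributing $O(\Delta)$; on $\{\Upsilon<\mathbf{T}\}$ it is at most $\sup_t V^{\pi_1}_t+C\Delta(1+\mathcal{G})$. Since the latter event forces $V^{\pi_1}$ to come close to zero, for a threshold $M>0$ one has $\{\Upsilon<\mathbf{T}\}\subseteq\{\mathcal{G}>M\}\cup\{\inf_t V^{\pi_1}_t\leq C\Delta(1+M)\}$. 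Markov's inequality together with Lemma \ref{lem.growth} controls the first piece, while Doob's inequality for the nonnegative $\mathbb{Q}$-supermartingale $V^{\pi_1}$, combined with (\ref{2.martingale}), controls the second. Applying H\"older's inequality with an exponent $p<2$ and optimising $M$ produces the bound $\hat{C}\Delta^{3/4}$; letting $\epsilon\to 0$ finishes the proof. The main obstacle is this final optimisation: the exponent $3/4$ emerges from the sharp trade-off between the $L^2(\mathbb{Q})$-moment bound on $\mathcal{G}$ supplied by Lemma \ref{lem.growth} and the polynomial integrability of the Radon--Nikodym density (\ref{2.martingale}), and controlling the lower tail of $V^{\pi_1}$ requires care since the supermartingale property by itself does not supply such an estimate.
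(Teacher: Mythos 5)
Your overall strategy --- perturb a near-optimal hedge for $(z_1,y_1)$ into a candidate for $(z_2,y_2)$, liquidate when admissibility would fail, and price the damage done by liquidation --- is the same as the paper's, but the step that actually produces the exponent $3/4$ has a genuine gap. You reduce the cost of the bad event $\{\Upsilon<\mathbf T\}$ to a bound on $\mathbb Q\left(\inf_t V^{\pi_1}_t\leq C\Delta(1+M)\right)$ and invoke ``Doob's inequality for the nonnegative $\mathbb Q$-supermartingale $V^{\pi_1}$.'' Doob's maximal inequality controls the upper tail of the running maximum, not the lower tail of the running minimum; for a nonnegative supermartingale there is \emph{no} quantitative bound of the form $\mathbb Q(\inf_t V_t\leq\lambda)\leq\varphi(\lambda)$ with $\varphi(0+)=0$, and indeed near-optimal shortfall-minimizing portfolios typically drive $V^{\pi_1}$ arbitrarily close to $0$ on an event of non-negligible probability (that is precisely where shortfall occurs). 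You flag this difficulty yourself in the last sentence but do not resolve it, and it cannot be resolved along these lines. The paper's proof never makes the bad event improbable. Instead it defines $\varsigma$ as the first time $V^{\pi_1}$ drops below the explicit threshold $|z_2-z_1|+|y_2-y_1|(1+\mu)(s+S^{(s)}_{t\wedge\varrho})$, so that liquidating $\pi_2$ there preserves admissibility, and then bounds the \emph{extra shortfall} on $\{\varsigma<\tau\}$ by $\mathbb I_{V^{\pi_1}_\tau>1}X^{\mathbf T,v}_\tau+\mathbb I_{V^{\pi_1}_\tau\leq 1}V^{\pi_1}_\tau$ (inequality (\ref{3.3})). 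H\"older with exponents $(4,4/3)$ applied to $Z_T^{-1}$ reduces both terms to $\left(\mathbb E_{\mathbb Q}[\mathbb I_{\varsigma<\tau}V^{\pi_1}_\tau]\right)^{3/4}$, and optional stopping for the $\mathbb Q$-supermartingale $V^{\pi_1}$ gives $\mathbb E_{\mathbb Q}[\mathbb I_{\varsigma<\tau}V^{\pi_1}_\tau]\leq\mathbb E_{\mathbb Q}[\mathbb I_{\varsigma<\tau}V^{\pi_1}_{\varsigma+}]=O(A)$ because $V^{\pi_1}_{\varsigma+}$ is small \emph{by the definition of} $\varsigma$ and $S^{(s)}$ is a $\mathbb Q$-martingale. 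The smallness lives in the value of the process at the liquidation time, not in the probability of liquidating; this is the idea your argument is missing.

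A second, more repairable problem is the multiplicative coupling $\gamma^{(2)}_t=(y_2/y_1)\gamma^{(1)}_t$. Your estimate $|V^{\pi_2}_t-V^{\pi_1}_t|\leq C\Delta(1+\mathcal G+V^{\pi_1}_t)$ needs $|y_2/y_1-1|=O(\Delta)$, which fails badly when $y_1$ is nonzero but much smaller than $\Delta$ (and this regime is not excluded by your preliminary reductions, which only bound $(z_1,y_1)$ from above). The paper instead uses the additive shift $\gamma^{(2)}_t=\gamma^{(1)}_t+(y_2-y_1)$ up to the first liquidation time $\varrho=\inf\{t:\gamma^{(1)}_{t+}=0\}\wedge\mathbf T$ and $\gamma^{(2)}_t=\gamma^{(1)}_t$ afterwards: the holding difference is then exactly $y_2-y_1$, the trades occur at the same instants, and the cost discrepancy is bounded by $|y_2-y_1|(1+\mu)(s+S^{(s)}_{t\wedge\varrho})$ uniformly in $y_1$. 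Switching at $\varrho$ is also what preserves the zero-pattern and avoids the extra fixed fee $\delta$ that would otherwise appear once $\gamma^{(1)}$ has been liquidated (this is the content of Lemma \ref{lem3.10}, which handles the residual case).
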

\begin{proof}
Fix $v\in C_{++}[0,T]$, $\mathbf T\in [0,T]$ and $(z_i,y_i)\in\mathbb{R}_{+}\times\mathbb{R}$, $i=1,2$. Assume that
$||v||\leq K$ and $y_1=0 \Leftrightarrow y_2=0$.
Denote $s=v(T-\mathbf T)$ and
$A=|z_1-z_2|+|y_1-y_2|$. Without loss of generality we assume that $0<A<1$.
Let $(\pi_1,\sigma_1)\in\mathcal{A}(\mathbf T,s,z_1,y_1)$ be
such that
\begin{equation}\label{3.2}
\cR(\mathbf T,v,\pi_1,\sigma_1)< R(\mathbf T,v,z_1,y_1)+A.
\end{equation}
Denote $\pi_1=(z,y,\gamma^{(1)})$. Define the stopping times
$\varrho=\inf\{t:\gamma^{(1)}_{t+}=0\}\wedge\mathbf T$
and
$$\varsigma=\inf\left\{t: V^{\pi_1}_t\leq  |z_2-z_1|+|y_2-y_1| (1+\mu)(s+{S}^{(s)}_{t\wedge\varrho}\right\}\wedge\mathbf T.$$
Introduce the hedge $(\pi_2,\sigma_2)\in\mathcal{A}(\mathbf T,s,z_2,y_2)$
by the relation $\sigma_2=\sigma_1$ and
$$\gamma^{(2)}_t=\mathbb{I}_{t\leq \varsigma}\left((\gamma^{(1)}_t+y_2-y_1)\mathbb{I}_{t\leq\varrho}+\gamma^{(1)}_t\mathbb{I}_{t>\varrho}\right). $$
Namely, at the time $\varsigma$ the investor liquidates the portfolio.
Until the time $\varsigma$,
the portfolio strategy $\gamma^{(2)}$ is a shift of $\gamma^{(1)}$ until the first stock liquidation time $\varrho$ of $\gamma^{(1)}$
and after this time the portfolios are the same.
From (\ref{2.4}) it follows
that
$$V^{\pi_2}_t=V^{\pi_2}_{t\wedge\varsigma}\geq V^{\pi_1}_{t\wedge\varsigma}-\left(|z_2-z_1|+ |y_2-y_1|(1+\mu)
(s+{S}^{(s)}_{t\wedge\varsigma\wedge\varrho})\right)\geq 0$$
where the last inequality follows from the definition of $\varsigma$.
Thus $(\pi_2,\sigma_2)\in\mathcal{A}(\mathbf T,s,z_2,y_2)$. Furthermore for any
random variable $\Phi$ we have the following
\begin{eqnarray}\label{3.3}
&(\Phi-V^{\pi_1}_t)^{+}\geq (\Phi-V^{\pi_2}_t)^{+}-\mathbb{I}_{t\leq\varsigma}
\left(|z_2-z_1|+ |y_2-y_1|(1+\mu) (s+{S}^{(s)}_{t\wedge\varsigma\wedge\varrho})\right)\\
&-\mathbb{I}_{t>\varsigma}\left(\mathbb{I}_{V^{\pi}_t>1}\Phi+\mathbb{I}_{V^{\pi}_t\leq 1}V^{\pi_1}_t\right).\nonumber
\end{eqnarray}
From (\ref{3.2})--(\ref{3.3}) we get
\begin{eqnarray*}
&R(\mathbf T,v,z_2,y_2)-R(\mathbf T,v,z_1,y_1)\leq \\
&A+ \cR(\mathbf T,v,\pi_2,\sigma_2)-\cR(\mathbf T,v,\pi_1,\sigma_1)\leq
A+|z_1-z_2|+\\
&\sup_{\tau\in\mathcal{T}_{\mathbf T}}\mathbb E_{\mathbb P}
\left(|y_2-y_1|(1+\mu)(s+{S}^{(s)}_{\tau\wedge\varsigma\wedge\varrho})+
\mathbb{I}_{\varsigma<\tau}\mathbb{I}_{V^{\pi_1}_{\tau}>1}X^{\mathbf T,v}_{\tau}+
\mathbb{I}_{\varsigma<\tau}\mathbb{I}_{V^{\pi_1}_{\tau}\leq 1}V^{\pi_1}_{\tau}
\right)\nonumber\\
&\leq A+|z_1-z_2|+ 2|y_2-y_1|(1+\mu)\mathbb E_{\mathbb P}\left(\max_{0\leq t\leq T} S^{(s)}_t\right)\\
&+\mathbb E_{\mathbb P}\left(\mathbb{I}_{\varsigma<\tau}\mathbb{I}_{V^{\pi_1}_{\tau}>1}X^{\mathbf T,v}_{\tau}+
\mathbb{I}_{\varsigma<\tau}\mathbb{I}_{V^{\pi_1}_{\tau}\leq 1}V^{\pi_1}_{\tau}\right).
\nonumber
\end{eqnarray*}
Thus in order to complete the proof it remains to show that
$$\sup_{\tau\in\mathcal{T}_{\mathbf T}}\mathbb E_{\mathbb P}\left(\mathbb{I}_{\varsigma<\tau}\mathbb{I}_{V^{\pi_1}_{\tau}>1}X^{\mathbf T,v}_{\tau}+
\mathbb{I}_{\varsigma<\tau}\mathbb{I}_{V^{\pi_1}_{\tau}\leq 1}V^{\pi_1}_{\tau}\right)=O(A^{3/4}).
$$
Let $\tau\in\mathcal{T}_T$. From
(\ref{2.martingale})--(\ref{2.3}) and the H\"older inequality (for $p=4$, $q=4/3$) it follows that
there exists a constant $c=c(K)$ such that
\begin{eqnarray*}
&\mathbb E_{\mathbb P}\left(\mathbb{I}_{\varsigma<\tau}\mathbb{I}_{V^{\pi_1}_{\tau}>1}X^{\mathbf T,v}_{\tau}+
\mathbb{I}_{\varsigma<\tau}\mathbb{I}_{V^{\pi_1}_{\tau}\leq 1}V^{\pi_1}_{\tau}\right)=\\
&\mathbb E_{\mathbb Q}\left(Z^{-1}_T\mathbb{I}_{\varsigma<\tau}\mathbb{I}_{V^{\pi_1}_{\tau}>1}X^{\mathbf T,v}_{\tau}+
Z^{-1}_T\mathbb{I}_{\varsigma<\tau}\mathbb{I}_{V^{\pi_1}_{\tau}\leq 1}V^{\pi_1}_{\tau}\right)\leq\\
&c\left((\mathbb E_{\mathbb Q} [\mathbb{I}_{\varsigma<\tau}\mathbb{I}_{V^{\pi_1}_{\tau}>1}])^{3/4}+
(\mathbb E_{\mathbb Q} [\mathbb{I}_{\varsigma<\tau}\mathbb{I}_{V^{\pi_1}_{\tau}<1}(V^{\pi_1}_{\tau})^{4/3}])^{3/4}\right)\leq\\
&c\left((\mathbb E_{\mathbb Q} [\mathbb{I}_{\varsigma<\tau}V^{\pi_1}_{\tau}])^{3/4}+
(\mathbb E_{\mathbb Q} [\mathbb{I}_{\varsigma<\tau}V^{\pi_1}_{\tau}])^{3/4}\right)\leq
\end{eqnarray*}
since ${\{V^{\pi_1}_t\}}_{t=0}^{\mathbf T}$ is a super--martingale with respect to $\mathbb Q$,
\begin{equation*}
\leq 2c(\mathbb E_{\mathbb Q} [\mathbb{I}_{\varsigma<\tau}V^{\pi_1}_{\varsigma+}])^{3/4}\leq
\end{equation*}
since
 $V^{\pi_1}_{\varsigma+}\leq  |z_2-z_1|+|y_2-y_1| (1+\mu)(s+{S}^{(s)}_{\varsigma\wedge\varrho})$
 on the event $\varsigma<\tau$,
\begin{equation*}
\leq 2c \left(|z_2-z_1|+2 s |y_2-y_1|(1+\mu)\right)^{3/4}=O(A^{3/4}),
\end{equation*}
and the proof is completed.
\end{proof}
\begin{lem}\label{lem3.10}
Let $K>0$, $v\in C_{++}[0,T]$ and $y\in\mathbb R$. Assume that $||v||\leq K$ and
$|y|\leq \frac{\delta}{\mu} \min_{t\in [0,T]}\frac{1}{v(t)}$. Then for any $z\geq 0$ and
$\mathbf T\in [0,T]$,
$$R(\mathbf T,v,z+\delta,0)-R(\mathbf T,v,z,y)\leq \hat C(K)|y|^{3/4}.$$
\end{lem}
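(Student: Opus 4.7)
The approach mirrors the proof of Lemma~\ref{lem3.2}; the obstruction there was the restriction $y_1=0\Leftrightarrow y_2=0$, needed to prevent a $\delta$-sized jump in the fixed-cost term $g(y_i,s)$. Here we sit precisely in the ``forbidden'' case $y_1=y\neq 0,\ y_2=0$, but the hypothesis $\mu|y|\max_tv(t)\leq\delta$ ensures $g(y,v(t))=\delta$, so the $\delta$-jump in $g$ is \emph{exactly} absorbed by the extra $\delta$ of initial cash in $(z+\delta,0)$. The $|y|^{3/4}$ bound then follows from the same super--martingale/H\"older estimate used at the end of the proof of Lemma~\ref{lem3.2}.

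Fix $\epsilon>0$ and pick a near-optimal hedge $(\pi_1,\sigma_1)\in\mathcal{A}(\mathbf T,s,z,y)$, $\pi_1=(z,y,\gamma^{(1)})$, $s=v(T-\mathbf T)$, such that $\cR(\mathbf T,v,\pi_1,\sigma_1)<R(\mathbf T,v,z,y)+\epsilon$. As in Lemma~\ref{lem3.2}, set
\[
\varrho=\inf\{t:\gamma^{(1)}_{t+}=0\}\wedge\mathbf T,\qquad
\varsigma=\inf\{t:V^{\pi_1}_t\leq|y|(1+\mu)(s+S^{(s)}_{t\wedge\varrho})\}\wedge\mathbf T,
\]
and define $(\pi_2,\sigma_2)$ with $\pi_2=(z+\delta,0,\gamma^{(2)})$, $\sigma_2=\sigma_1$, where
\[
\gamma^{(2)}_t=\mathbb I_{t\leq\varsigma}\bigl[(\gamma^{(1)}_t-y)\mathbb I_{t\leq\varrho}+\gamma^{(1)}_t\mathbb I_{t>\varrho}\bigr].
\]
Note $\gamma^{(2)}_0=0$. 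On $[0,\varrho\wedge\varsigma]$ the trades of $\gamma^{(1)},\gamma^{(2)}$ coincide, so the transaction-cost sums in (\ref{2.4}) agree; the cancellation $g(0,s)-g(y,s)=-\delta$ then exactly compensates the extra $\delta$ of initial cash, yielding
\[
V^{\pi_2}_t-V^{\pi_1}_t=-y(S^{(s)}_t-s)+g(\gamma^{(1)}_t,S^{(s)}_t)-g(\gamma^{(1)}_t-y,S^{(s)}_t).
\]

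A case analysis now gives $(V^{\pi_1}_t-V^{\pi_2}_t)^+\leq|y|(1+\mu)(s+S^{(s)}_t)$ on $[0,\varrho\wedge\varsigma]$. In the generic case where both $\gamma^{(1)}_t$ and $\gamma^{(1)}_t-y$ are non-zero, this comes from the Lipschitz bound $|g(a,S)-g(b,S)|\leq\mu|a-b|S$ away from zero. In the edge case $\gamma^{(1)}_t=y$ (so $\gamma^{(2)}_t=0$), using only $g(y,S^{(s)}_t)\geq\mu|y|S^{(s)}_t$ gives $(y(S^{(s)}_t-s)-g(y,S^{(s)}_t))^+\leq(y(S^{(s)}_t-s)-\mu|y|S^{(s)}_t)^+\leq|y|\bigl(s+(1+\mu)S^{(s)}_t\bigr)$; crucially the potentially dangerous $\delta$ in $g(y,S^{(s)}_t)$ lies on the ``good'' side of this positive-part estimate. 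Admissibility $V^{\pi_2}_t\geq 0$ on $[0,\varsigma)$ then follows from the definition of $\varsigma$; at $\varsigma$ we close out, and after $\varrho$ the two portfolios coincide so the liquidation trade there behaves as in Lemma~\ref{lem3.2}.

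Finally, the shortfall estimate reproduces the final display in the proof of Lemma~\ref{lem3.2}: for every $\tau\in\mathcal T_{\mathbf T}$ one splits the expectation over the events $\{\sigma_1\wedge\tau\leq\varsigma\}$ and $\{\varsigma<\sigma_1\wedge\tau\}$. The first contributes $\cR(\mathbf T,v,\pi_1,\sigma_1)$ plus an $O(|y|)$ term by the pointwise bound above together with the standard moment estimates on $S^{(s)}$ (depending only on $K$). The second is controlled, via the super--martingale property of $V^{\pi_1}$ under $\mathbb Q$ and H\"older's inequality with $p=4,q=4/3$, using $V^{\pi_1}_{\varsigma+}\leq|y|(1+\mu)(s+S^{(s)}_{\varsigma\wedge\varrho})$, by $\hat C(K)|y|^{3/4}$. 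Combining and sending $\epsilon\downarrow 0$ yields the claim. The main obstacle is the edge-case identity above: the standing assumption $\mu|y|\max_tv(t)\leq\delta$ is exactly what is needed to prevent a $\delta$-sized term from surfacing in the positive part of $V^{\pi_1}-V^{\pi_2}$; all remaining steps are direct transcriptions of the corresponding parts of the proof of Lemma~\ref{lem3.2}.
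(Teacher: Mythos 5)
Your proposal is correct and follows essentially the same route as the paper: the paper's proof also constructs $\pi_2=(z+\delta,0,\gamma^{(2)})$ with the same shift $\gamma^{(2)}_t=\mathbb I_{t\leq\varsigma}\bigl((\gamma^{(1)}_t-y)\mathbb I_{t\leq\varrho}+\gamma^{(1)}_t\mathbb I_{t>\varrho}\bigr)$, invokes the hypothesis precisely to get $g(y,s)=\delta$ so that the extra initial cash cancels the fixed fee, and then concludes by the super--martingale/H\"older argument of Lemma \ref{lem3.2}. Your explicit treatment of the edge case $\gamma^{(1)}_t=y$ merely fills in a detail the paper leaves implicit.
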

\begin{proof}
Denote $s=v(T-\mathbf T)$. Without loss of generality we assume that $y\neq 0$.
Let $(\pi_1,\sigma_1)\in\mathcal{A}(\mathbf T,s,z,y)$
such that
$R(\mathbf T,v,z,y)> \cR(\mathbf T,v,\pi_1,\sigma_1)-|y|.$
Set $\pi_1=(z,y,\gamma^{(1)})$. Define the stopping times
$\varrho=\inf\{t:\gamma^{(1)}_{t+}=0\}\wedge\mathbf T$
and
$$\varsigma=\inf\left\{t: V^{\pi_1}_t\leq  |y| (1+\mu){S}^{(s)}_{t\wedge\varrho}\right\}\wedge\mathbf T.$$
Introduce the hedge $(\pi_2,\sigma_2)\in\mathcal{A}(\mathbf T,s,z+\delta,0)$
by the relation $\sigma_2=\sigma_1$ and $\pi_2=(z+\delta,0,\gamma^{(2)})$ where
$$\gamma^{(2)}_t=\mathbb{I}_{t\leq \varsigma}\left((\gamma^{(1)}_t-y)\mathbb{I}_{t\leq\varrho}+\gamma^{(1)}_t\mathbb{I}_{t>\varrho}\right). $$
Our assumptions imply that $g(y,s)=\delta$, and so
from (\ref{2.4}) we get
$$V^{\pi_2}_t=V^{\pi_2}_{t\wedge\varsigma}\geq V^{\pi_1}_{t\wedge\varsigma}-|y|(1+\mu) {S}^{(s)}_{t\wedge\varsigma\wedge\varrho}\geq 0.$$
Thus, similarly to Lemma \ref{lem3.2} we get
$$R(\mathbf T,v,z+\delta,0)-R(\mathbf T,v,z,y)\leq \hat C(K) |y|^{3/4},$$
as required.
\end{proof}

The following Proposition is the main result of this section.
\begin{prop}\label{essential}
(i). The function $R:[0,T]\times C_{++}[0,T]\times\mathbb{R}_{+}\times\mathbb{R}\rightarrow\mathbb{R}$
is upper semi continuous (and hence measurable). If we restrict the function $R$ to the domain
$[0,T]\times C_{++}[0,T]\times\mathbb{R}_{+}\times\mathbb{R}\setminus\{0\}$ then the function
$R:[0,T]\times C_{++}[0,T]\times\mathbb{R}_{+}\times\mathbb{R}\setminus\{0\}\rightarrow\mathbb R$
is a continuous function.\\
(ii). There exists a measurable function
$\beta^{*}:[0,T]\times C_{++}[0,T]\times\mathbb{R}_{+}\times\mathbb{R}\rightarrow\mathbb{R}$
such that $\beta^{*}(\mathbf T,v,\cdot,\cdot \cdot)$ depends only
on $v_{[0,T-\mathbf T]}$ and the infimum in
$\beta$ in (\ref{4.new+}) is attained at
$\beta^{*}=\beta^{*}(\mathbf T,v,z,y)$.\\
(iii).
For any $y\in\mathbb R$
the function $\hat{R}_y:[0,T]\times C_{++}[0,T]\times[\delta\mathbb{I}_{y=0},\infty)\rightarrow\mathbb{R}$
defined by
$$\hat R_y(\mathbf T,v,z)=\hat  R(\mathbf T,v,z,y)$$
is a continuous function.
\end{prop}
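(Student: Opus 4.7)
The plan is to reduce all three parts to the continuity lemmas already proved in this section (Lemmas \ref{lem3.time}, \ref{lem3.stock}, \ref{lem3.2} and \ref{lem3.10}), supplemented by a short liquidation-comparison argument that handles the boundary $y=0$.

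For part (i), joint continuity on the open set $y\neq 0$ is obtained by chaining the three lemmas: given two close points $(\mathbf T_i,v_i,z_i,y_i)$ with $y_1,y_2\neq 0$, Lemma \ref{lem3.time} aligns maturities, Lemma \ref{lem3.stock} aligns the paths (each at the cost of only rescaling $y$), and Lemma \ref{lem3.2} absorbs the remaining variation in $(z,y)$, its hypothesis $y_1=0\Leftrightarrow y_2=0$ being met. Upper semi-continuity at the boundary $y=0$ needs a separate argument. Given $(\mathbf T_n,v_n,z_n,y_n)\to(\mathbf T,v,z,0)$ with $y_n\neq 0$, any hedge $(\pi,\sigma)\in\mathcal A(\mathbf T_n,v_n(T-\mathbf T_n),z_n,0)$ can be prefixed by an instantaneous liquidation of the initial $y_n$ shares; since $g(-y_n,s_n)=\delta=g(y_n,s_n)$ for $|y_n|$ small, the liquidation cost in the sum of jumps in (\ref{2.4}) is exactly cancelled by the $g(y_n,s_n)$ term, so the resulting portfolio starting from $(z_n,y_n)$ has the same value process as $\pi$. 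This yields $R(\mathbf T_n,v_n,z_n,y_n)\le R(\mathbf T_n,v_n,z_n,0)$, whose right-hand side converges to $R(\mathbf T,v,z,0)$ by the continuity on the slice $y=0$. Borel measurability follows from upper semi-continuity.

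For part (ii), by part (i) the objective $\beta\mapsto R(\mathbf T,v,h(v_{T-\mathbf T},z,y,\beta),\beta+y)$ is upper semi-continuous on the compact set $\Gamma(v_{T-\mathbf T},z,y)$ (since $R$ is upper semi-continuous and $h$ is continuous in $\beta$), so the infimum is attained. A measurable selector is obtained from the Kuratowski--Ryll-Nardzewski theorem applied to the compact-valued argmin correspondence, whose graph measurability is a standard consequence of upper semi-continuity of the objective and the continuous dependence of $\Gamma$ on its arguments (as $\Gamma$ is an upper-level set of the continuous function $h$). The claimed dependence of $\beta^{*}(\mathbf T,v,\cdot,\cdot\cdot)$ only on $v_{[0,T-\mathbf T]}$ is automatic, since every ingredient of the minimization shares this property.

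Part (iii) is the delicate one. For fixed $y$, upper semi-continuity of $\hat R_y$ in $(\mathbf T,v,z)$ follows immediately from (i) and joint continuity of $(X_0^{\mathbf T,v}-z)^{+}$. For lower semi-continuity along a sequence $(\mathbf T_n,v_n,z_n)\to(\mathbf T,v,z)$, I extract a convergent subsequence of optimizers $\beta_n^{*}\to\beta^{*}\in\Gamma(v_{T-\mathbf T},z,y)$ and split cases. If $\beta^{*}+y\neq 0$, part (i) gives continuity of $R$ at the limit and passage to the liminf is immediate. If $\beta_n^{*}=-y$ eventually, the value is $R(\mathbf T_n,v_n,z_n,0)\to R(\mathbf T,v,z,0)\ge\hat R_y(\mathbf T,v,z)$ by continuity on the slice $y=0$. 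The critical remaining case is $\beta_n^{*}\to-y$ with $\beta_n^{*}\neq-y$: here the jump of $g$ at zero forces $h_n:=h(v_n(T-\mathbf T_n),z_n,y,\beta_n^{*})\to z-\delta$ rather than $z$, while simultaneously $\beta_n^{*}+y\to 0$ lands on the upper-semi-continuity singularity of $R$. Lemma \ref{lem3.10} is tailor-made for this scenario: it yields
\[
R(\mathbf T_n,v_n,h_n,\beta_n^{*}+y)\;\ge\;R(\mathbf T_n,v_n,h_n+\delta,0)-\hat C\,|\beta_n^{*}+y|^{3/4},
\]
and since $h_n+\delta\to z$ and $R(\cdot,\cdot,\cdot,0)$ is continuous in its first three arguments, the liminf is at least $R(\mathbf T,v,z,0)\ge\hat R_y(\mathbf T,v,z)$. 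The case $y=0$ is completely analogous with $-y$ replaced by $0$. The main obstacle is thus the coincident jumps of $h$ and of $R$ at the zero-stocks boundary, and Lemma \ref{lem3.10} is precisely calibrated so that the $\delta$ shift produced by $h$ cancels the jump of $R$.
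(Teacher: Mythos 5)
Your part (i) follows the paper's route exactly (the triangle inequality through Lemmas \ref{lem3.time}, \ref{lem3.stock}, \ref{lem3.2}, plus the comparison $R(\mathbf T,v,z,y)\le R(\mathbf T,v,z,0)$ obtained by absorbing the initial position into the first trade), and your lower semi-continuity argument in part (iii) — convergent subsequence of the approximating optimizers, case split on $\beta_n^{*}+y\to 0$, Lemma \ref{lem3.10} to cancel the $\delta$-jump of $h$ against the jump of $R$ at the zero-stock boundary — is precisely the mechanism of the paper. However, there are two genuine gaps.

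First, in part (ii) the sentence ``the objective is upper semi-continuous on the compact set $\Gamma$, so the infimum is attained'' is backwards: on a compact set, upper semi-continuity guarantees attainment of the \emph{supremum}, not the infimum. Moreover $h$ is \emph{not} continuous in $\beta$ at $\beta=-y$ (it jumps up by $\delta$ there), so even the semicontinuity of the composed objective needs care. What is actually true, and what the paper proves, is that along a minimizing sequence $\beta_n\to\hat\beta$ the value at $\hat\beta$ is at most the limit of the values: away from $\hat\beta=-y$ this uses full continuity of $R$ on $\{y\neq 0\}$ from part (i), and at $\hat\beta=-y$ it uses Lemma \ref{lem3.10} — i.e.\ the objective is \emph{lower} semi-continuous at $\beta=-y$, which is the right direction for attaining an infimum. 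You clearly know this argument, since you run it verbatim in part (iii); it belongs in part (ii).

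Second, in part (iii) upper semi-continuity of $\hat R_y$ is \emph{not} immediate from (i): $\hat R_y$ involves $\inf_{\beta\in\Gamma(v_{T-\mathbf T},z,y)}$ over a constraint set that varies with $(\mathbf T,v,z)$, and the infimum over a varying compact set of an upper semi-continuous integrand is not upper semi-continuous unless the correspondence $\Gamma$ is lower hemicontinuous (feasible points can disappear in the limit, making the infimum jump up). The paper handles this by fixing the optimizer $\tilde\beta$ at the limit point and splitting into $h(s,z,y,\tilde\beta)>0$ — where $\tilde\beta$ remains feasible for nearby parameters and u.s.c.\ of $R$ applies — and the boundary case $h(s,z,y,\tilde\beta)=0$, which requires the separate observation that with zero cash no further trades are admissible, so $R(\mathbf T,v,0,\tilde y)$ is independent of $\tilde y$ and continuous in $(\mathbf T,v)$, combined with feasibility of the full liquidation $\beta=-y$ for every $n$. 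This boundary case is the genuinely delicate half of part (iii) and is missing from your proposal.
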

\begin{proof}
(i). Let $(\mathbf T,v,z,y)\in [0,T]\times C_{++}[0,T]\times\mathbb{R}_{+}\times\mathbb R$.
Clearly for any $(\tilde{\mathbf T},\tilde v,\tilde z,\tilde y)\in [0,T]\times C_{++}[0,T]\times\mathbb{R}_{+}\times\mathbb R$
we have
 \begin{eqnarray}\label{3.500}
&|R(\mathbf T,v,z,y)-R(\tilde{\mathbf T},\tilde v,\tilde z,\tilde y)|\leq\\
&\left|R(\mathbf T,v,z,y)-R\left(\tilde{\mathbf T},v,z,\frac{y v(T-{\mathbf T})}{v(T-\tilde{\mathbf T})}\right)\right|+\nonumber\\
&\left|R\left(\tilde{\mathbf T},v,z,\frac{y v(T-{\mathbf T})}{v(T-\tilde{\mathbf T})}\right)-R\left(\tilde{\mathbf T},\tilde v,
z,\frac{y v(T-{\mathbf T})}{\tilde v(T-\tilde{\mathbf T})}\right)\right|+\nonumber\\
&\left|R\left(\tilde{\mathbf T},\tilde v,
z,\frac{y v(T-{\mathbf T})}{\tilde v(T-\tilde{\mathbf T})}\right)-R(\tilde{\mathbf T},\tilde v,\tilde z,\tilde y)\right|.\nonumber
\end{eqnarray}
This together with Lemmas \ref{lem3.time}, \ref{lem3.stock},
and \ref{lem3.2}
yields that $R:[0,T]\times C_{++}[0,T]\times\mathbb{R}_{+}\times\mathbb{R}\setminus\{0\}\rightarrow\mathbb R$
is a continuous function. Next, we prove that
$R:[0,T]\times C_{++}[0,T]\times\mathbb{R}_{+}\times\mathbb{R}\rightarrow\mathbb R$ is upper semi continuous.
Set $s=v(T-\mathbf T)$.
For any \textit{admissible} portfolio
$\pi_1=(z,0,\gamma^{(1)})$ introduce the portfolio
$\pi_2=(z,y,\gamma^{(2)})$ by
$\gamma^{(2)}_0=y$ and $\gamma^{(2)}_t=\gamma^{(1)}_t$ for $t>0$.
Observe that for any $t>0$,
$$V^{\pi_2}_t-V^{\pi_1}_t=V^{\pi_2}_{0+}-V^{\pi_1}_{0+}=g(y,s)-g(y-\gamma^{(1)}_{0+},s)+g(\gamma^{(1)}_{0+},s)
\geq 0.$$
Thus $R(\mathbf T,v,z,y)\leq R(\mathbf T,v,z,0)$. This together with (\ref{3.500}) and
Lemmas \ref{lem3.time}, \ref{lem3.stock} and (\ref{lem3.2}) completes the proof.\\
(ii). Fix $(\mathbf T,v,z,y)\in [0,T]\times C_{++}[0,T]\times\mathbb{R}_{+}\times\mathbb R$.
Set $s=v(T-\mathbf T)$.
Assume that $\Gamma(s,z,y)\neq \emptyset$. We want to show that the minimum in
(\ref{4.new+}) is attained.
Thus, let $\{\beta_n\}_{n=1}^\infty\subset\Gamma(s,z,y)$ for which
\begin{equation}\label{3.502-}
\lim_{n\rightarrow\infty}R\left(\mathbf T,v,h(s,z,y,\beta_n),y+\beta_n\right)
=\inf_{\beta\in \Gamma(s,z,y)}R\left(\mathbf T,v,h(s,z,y,\beta),y+\beta\right).
\end{equation}
The set  $\Gamma(s,z,y)$ is compact, and so without loss of generality we assume (by passing to a sub sequence)
that the sequence $\{\beta_n\}_{n=1}^\infty\subset\Gamma(s,z,y)$ converges, thus let
$\lim_{n\rightarrow\infty}\beta_n=\hat\beta\in\Gamma(s,z,y)$. First assume that $\hat\beta\neq-y$, then
$$h(s,z,y,\hat\beta)=\lim_{n\rightarrow\infty} h(s,z,y,\beta_n).$$
Thus from (\ref{3.502-})
and the fact that
$R:[0,T]\times C_{++}[0,T]\times\mathbb{R}_{+}\times\mathbb{R}\setminus\{0\}\rightarrow\mathbb R$ is continuous we conclude that
\begin{equation}\label{3.502}
R\left(\mathbf T,v,h(s,z,y,\hat\beta),y+\hat\beta\right)
=\inf_{\beta\in \Gamma(s,z,y)}R\left(\mathbf T,v,h(s,z,y,\beta),y+\beta\right).
\end{equation}
Next, we deal with the case
$\hat\beta=-y$. In this case
$h(s,z,y,\hat\beta)=\delta+\lim_{n\rightarrow\infty} h(s,z,y,\beta_n)$, and so
from (\ref{3.502-}) and Lemma \ref{lem3.10} we obtain
\begin{equation}\label{3.503}
R\left(\mathbf T,v, h(s,z,y,\hat\beta),0\right)=
\inf_{\beta\in \Gamma(s,z,y)}R\left(\mathbf T,v,h(s,z,y,\beta),y+\beta\right).
\end{equation}
We conclude that the infimum in (\ref{4.new+}) is attained at $\hat\beta\in \Gamma(s,z,y)$. It follows that there exists a
measurable map $\beta^{*}:[0,T]\times C_{++}[0,T]\times\mathbb{R}_{+}\times\mathbb{R}\rightarrow\mathbb{R}$
such that $\beta^{*}=\beta^{*}(\mathbf T,v,z,y)$ depends only
on $v_{[0,T-\mathbf T]}$ and the infimum in (\ref{4.new+}) is attained at $\beta^{*}$
 provided that $\Gamma(v(T-\mathbf T),z,y)\neq \emptyset$.
For instance,
$\beta^{*}(\mathbf T,v,z,y)=0$ if $y=0$ and $z<\delta$ (i.e. $\Gamma\left(v(T-\mathbf T),z,y\right)=\emptyset$),
and if $\Gamma\left(v(T-\mathbf T),z,y\right)\neq\emptyset$,
\begin{eqnarray*}
&\beta^{*}(\mathbf T,v,z,y)=\min_{\tilde\beta\in\Gamma\left(v(T-\mathbf T),z,y\right)}
R\left(\mathbf T,v,h(s,z,y,\tilde\beta),y+\tilde\beta\right)=\\
&\inf_{\beta\in \Gamma(s,z,y)}R\left(\mathbf T,v,h(s,z,y,\beta),y+\beta\right).
\end{eqnarray*}
(iii).
Fix $y\in\mathbb {R}$. Choose a sequence $\{\mathbf T_n,v_n,z_n\}_{n=1}^\infty\subset [0,T]\times C[0,T]\times[\delta\mathbb{I}_{y=0},\infty)$
 which converges to $(\mathbf T,v,z)$.
 From the continuity of $G$ it follows that
 \begin{equation}\label{3.n}
 X^{\mathbf T,v}_0=\lim_{n\rightarrow\infty}  X^{\mathbf T_n,v_n}_0.
 \end{equation}
 Set $s=v(T-\mathbf T)$ and $s_n=v_n(T-\mathbf T_n)$, $n\in\mathbb N$.
Let $\beta_n=\beta^{*}(\mathbf T_n,v_n,z_n,y)$, $n\in\mathbb N$.
 The sequence $\beta_n$, $n\in\mathbb {N}$ is bounded, and so without loss of generality by taking a subsequence
we can assume that it converges. Denote $\hat\beta=\lim_{n\rightarrow\infty}\beta_n$.
It is straightforward to check that $\hat\beta\in \Gamma(s,z,y)$.  By using (\ref{3.n}) and applying
similar arguments as in (\ref{3.502})--(\ref{3.503}) we get
\begin{eqnarray*}
&\hat R(\mathbf T,v,z,y)\leq \min\left((X^{\mathbf T,v}_0-z)^{+},
R\left(\mathbf T,v,h(s,z,y,\hat\beta),\hat\beta+y\right)\right)\\
&\leq
\lim\inf_{n\rightarrow\infty}\hat R(\mathbf T_n,v_n,z_n,y).
\end{eqnarray*}
This yields the lower semi--continuity of $\hat R_y$. Thus, it remains to establish upper semi-continuity.
Let $\tilde\beta=\beta^{*}(\mathbf T,s,z,y)$. First assume that
$h(s,z,y,\tilde\beta)>0$.
Then for sufficiently large $n$, we have $\tilde\beta\in\Gamma(s_n,z_n,y)$ and
$$h(s,z,y,\tilde\beta)=\lim_{n\rightarrow\infty} h(s_n,z_n,y,\tilde\beta).$$
Thus from (\ref{3.n}) and the fact that $R$ is upper semi--continuous we get
\begin{eqnarray}\label{3.505}
&\hat R(\mathbf T,v,z,y)= \min\left((X^{\mathbf T,v}_0-z)^{+},
R\left(\mathbf T,v,h(s,z,y,\tilde\beta),\tilde\beta+y\right)\right)\\
&\geq
\lim\sup_{n\rightarrow\infty}\hat R(\mathbf T_n,v_n,z_n,y).\nonumber
\end{eqnarray}
Finally, assume that $h(s,z,y,\tilde\beta)=0$.
Let $\pi=(0,\tilde y,\gamma)$ be an \textit{admissible} portfolio
for some $\tilde y$. From the fact that the geometric Brownian motion can increase or decrease for any amount (with positive probability)
on any time interval it follows that
 $\gamma_t=0$ for $t>0$. Indeed, for otherwise the portfolio value can become negative with positive probability. Thus
$$R(\mathbf T,v,0,\tilde y)=\inf_{\sigma\in\mathcal{T}_{\mathbf T}}\sup_{\tau\in\mathcal{T}_{\mathbf T}}
\mathbb{E}_{\mathbb P}\left(X^{\mathbf T,v}_{\sigma}\mathbb{I}_{\sigma<\tau}+
Y^{\mathbf T,v}_{\tau}\mathbb{I}_{\tau\leq\sigma}\right).$$
In particular, $R(\mathbf T,v,0,\tilde y)$ does not depend on $\tilde y$.
From (\ref{2.3}) and the fact that $F,G$ are continuous it follows that
$R(\cdot,\cdot\cdot,0,\tilde y)$ is continuous.
For any $n\in\mathbb {N}$, $-y\in\Gamma(s_n,z_n,y)$, and so from the upper semi continuity of $R$ it follows that
\begin{eqnarray}\label{3.506}
&\hat R(\mathbf T,v,z,y)= \min\left((X^{\mathbf T,v}_0-z)^{+},R (\mathbf T,v,0,\tilde\beta+y)\right)\\
&\geq\limsup_{n\rightarrow\infty}
\min\left((X^{\mathbf T_n,v_n}_0-z_n)^{+},R (\mathbf T_n,v_n,z_n-\delta\mathbb{I}_{y=0},0)\right)\nonumber\\
&\geq\lim\sup_{n\rightarrow\infty}\hat R(\mathbf T_n,v_n,z_n,y).
\nonumber
\end{eqnarray}
From (\ref{3.505})--(\ref{3.506}) we derive the upper semi continuity of $\hat R_y$, and the proof is completed.
\end{proof}

\section{Acknowledgement}
The authors were partially supported by Einstein Foundation Grant no.A 2012 137. The first author also
acknowledges support of the Marie Curie Actions fellowships Grant no.618235 and the second author of Israeli Science
Foundation Grant no.82/10.


\begin{thebibliography}{99}
\footnotesize



\bibitem{AJS}
{\sc Altarovici, A., Muhle-Karbe, J.
and Soner, H.M.} (2015).
Asymptotics with fixed transaction costs,
{\em Finance and Stochastics.} {\bf 19,} 363--414.


\bibitem{CK}
{\sc Cvitanic, J. and Karatzas, I.} (1996).
Backward stochastic differential equations with reflection and Dynkin games,
{\em Annals of  Probability.} {\bf 24,} 2024-–2056.

\bibitem{D}
{\sc  Dolinsky, Y.}
Limit Theorems for partial hedging under transaction Costs. (2014).
{\em Math. Finance.} {\bf 24,} 567--597.


\bibitem{D1} {\sc  Dolinsky, Y.} (2013).
Hedging of game Options with the presence of transaction costs.
{\em Ann. Appl. Probab.} {\bf 23,}
2212--2237.


\bibitem{DK1} {\sc Dolinsky, Y. and Kifer, Y.} (2007). Hedging with risk for
game options in discrete time. {\em Stochastics.}
{\bf 79,}
169--195.

\bibitem{DK2}
{\sc Dolinsky, Y. and Kifer, Y.} (2008). Binomial approximations of
shortfall risk for game options. {\em Ann. Appl. Probab.} {\bf 18,}
1737--1770.


\bibitem{EH}
{\sc Eastham, J.F. and Hastings, K.J.} (1988).
Optimal impulse control of portfolios.
{\em Math.Oper.Res.} {\bf 13}, 588--605.


\bibitem{G1}
{\sc Guasoni, P.} (2002).
Risk minimization under transaction costs.
{\em Finance and Stochastics.} {\bf 6}, 91--113.

\bibitem{G2}
{\sc Guasoni, P.} (2002).
Optimal investment with transaction costs and without semimartingales.
{\em Ann.Appl.Probab.} {\bf 12}, 1227--1246.


\bibitem{IK}
{\sc Iron, Y. and Kifer, Y.}
Hedging of swing game options in continuous time.
{\em Stochastics.} {\bf 83}, 365--404.


\bibitem{Ka1}
{\sc Kamizono, K.} (2001).
Partial hedging under proportional transaction costs.
{\em PhD dissertation.}

\bibitem{Ka2}
{\sc Kamizono, K.} (2003).
Partial hedging under transaction costs.
{\em SIAM J.Control Optimization.} {\bf 5,} 1545--1558.

\bibitem{K}
{\sc Kifer, Y.} (2006).
Error estimates for binomial approximations of game options.
{\em Ann.Appl.Probab.} {\bf 16}, 984--1033.

\bibitem{Ki1}
{\sc Kifer, Y.} (2000). Game options. {\em Finance and Stoch.} {\bf 4,}
 443--463.

\bibitem{Ki2}
{\sc Kifer, Y.} (2013).
Dynkin games and Israeli options.
{\em ISRN Probability and Statistics.} Id.856458.


\bibitem{Ko}
{\sc Korn, R.} (1998).
Portfolio optimisation with strictly positive transaction costs and impulse control.
{\em Finance and Stochastics.}{\bf 2}, 85--114.

\bibitem{KQ}
{\sc Kobylanski, M. and Quenez, M.C.} (2012).
Optimal stopping time problem in a general
framework.
{\em Electron. J. Probab.} {\bf 17}, 1--28.


\bibitem{LM}
{\sc Lepeltier, J.P. and Maingueneau, J.P.} (1984).
Le jeu de Dynkin en theorie generale sans l’hypothese de Mokobodski.
{\em Stochastics.} {\bf 13}, 24-–44.

\bibitem{LMW}
{\sc Lo, A.W., Mamaysky, H. and Wang, J.} (2004).
Asset prices and trading volume under fixed transaction costs.
{\em J. Polit. Econ.} {\bf 112}, 1054-–1090.


\bibitem{MP}
{\sc Morton, A.J. and Pliska, S.R.} (1995).
Optimal portfolio management with fixed transaction costs.
{\em Math. Finance} {\bf 5}, 337--356.


\bibitem{OP} B.Oksendal and A.Sulem,
{\sc Oksendal, B. and Sulem, A.} (2002).
Optimal consumption and portfolio with both fixed and proportional transaction costs‏.
{\em SIAM J.Control Optimization.} {\bf 6,} 1765--1790.








\end{thebibliography}
\end{document}